\documentclass[10pt]{article}

\usepackage{amsmath}
\usepackage{amssymb, amscd, amsthm}
\usepackage[all]{xy}
\usepackage[dvips]{graphicx}
\usepackage{verbatim}
\usepackage[perpage,symbol*]{footmisc}
\setlength{\hoffset}{-2cm} \setlength{\textwidth}{16cm}

\newtheorem{theorem}{Theorem}
\newtheorem{definition}{\textbf{Definition}}

\newtheorem{lemma}{\textbf{Lemma}}
\newtheorem{remark}{\textbf{Remark}}

\newtheorem{example}{\textbf{Example}}

\usepackage[justification=centering]{caption}
\usepackage{longtable}
\usepackage{multirow}

\usepackage{amsfonts}
\usepackage{graphicx}
\usepackage{color}
\usepackage{url}
\usepackage{longtable}
\usepackage{booktabs}
\setlength{\hoffset}{-2cm} \setlength{\textwidth}{16cm}

\usepackage{float}
\restylefloat{table}

\newcommand{\tabincell}[2]{\begin{tabular}{@{}#1@{}}#2\end{tabular}}

\newcommand{\Tr}{{{\rm Tr}}}

\begin{document}
\title{Differential uniformity and costacyclic code from some power mapping}

\author{Yuehui Cui,\mbox{ } Jinquan Luo}
\date{School of Mathematics and Statistics,
 Central China Normal University, Wuhan 430079, China}
\maketitle
\insert\footins{\small{\it Email addresses} :luojinquan@mail.ccnu.edu.cn (Jinquan Luo); hfcyh1@163.com (Yuehui Cui).}
{\centering\section*{Abstract}}
 \addcontentsline{toc}{section}{\protect Abstract}
 \setcounter{equation}{0}

In this paper, we study the differential properties of   $x^d$ over $\mathbb{F}_{p^n}$ with $d=p^{2l}-p^{l}+1$. By studying the differential equation of $x^d$ and the number of rational points on some curves over finite fields, we completely determine differential spectrum of $x^{d}$. Then we investigate the $c$-differential uniformity of $x^{d}$. We also calculate the value distribution of a class of exponential sum related to $x^d$. In addition, we obtain a class of six-weight consta-cyclic codes, whose weight distribution is explicitly determined. Part of our results is a complement of the works shown in [\ref{H1}, \ref{H2}] which mainly focus on cross correlations.

\medskip
\noindent{\large\bf Keywords: }\medskip Differential spectrum, $c$-differential uniformity, Constacyclic code, Exponential sum, Weight distribution.

\noindent{\bf2010 Mathematics Subject Classification}: 94A60, 11T06, 11T23.

\section{Introduction}

Let $\mathbb{F}_{p^n}$ be the finite field with $p^n$ elements, where $p$ is a prime and $n$ is a positive integer. Let $\mathbb{F}_{p^n}^*$ denote the multiplicative group of the finite field $\mathbb{F}_{p^n}$. For any function $f:\mathbb{F}_{p^n}\rightarrow\mathbb{F}_{p^n}$, the derivative function of $f$ in respect to any element $a \in \mathbb{F}_{p^n}$  is a function from $\mathbb{F}_{p^n}$ to $\mathbb{F}_{p^n}$  defined by
\begin{equation*}
  D_af(x)=f(x+a)-f(x), \forall x \in \mathbb{F}_{p^n}.
\end{equation*}
For any $a,b \in \mathbb{F}_{p^n}$, let
\begin{equation*}
  \delta(a,b)=\#\left\{x \in \mathbb{F}_{p^n} \mid D_af(x)=b \right\}.
\end{equation*}
The differential uniformity of $f$ is defined as
\begin{equation*}
  \delta=\max\left\{\delta(a,b) \mid a \in \mathbb{F}_{p^n}^*, b \in \mathbb{F}_{p^n}  \right\}.
\end{equation*}

If $\delta=1$, then $f$ is called a perfect nonlinear (PN) function. If $\delta=2$, then $f$ is called an almost perfect nonlinear (APN) function. The readers can refer to [\ref{APN1}, \ref{APN2}, \ref{book cc}, \ref{quad APN}] for more information on PN and APN functions. In recent years, the design of Substitution box (S-box) that can effectively resist differential attacks has become a hot topic in cryptographic research. Studies have shown that the power functions with low differential uniformity offer significant advantages in S-box design. Therefore, researching the power functions with low differential uniformity and exploring their application in S-box design has become an important research direction in the field of cryptography. For related details, we refer to [\ref{sbox1}, \ref{sbox2}]. When $f(x)=x^d$ is a power mapping,
\begin{equation*}
  (x+a)^d-x^d=b \Leftrightarrow a^d \left (\left(\frac{x}{a}+1\right)^d-\left(\frac{x}{a}\right)^d \right)=b
\end{equation*}
implying that $\delta(a,b)=\delta(1,\frac{b}{a^d})$ for all $a \in \mathbb{F}_{p^n}^*$ and $b \in \mathbb{F}_{p^n}$. Thus, the differential properties of $f(x)=x^d$ are completely determined by the values of $\delta(1,b)$ when $b$ runs through $\mathbb{F}_{p^n}$.

\begin{definition}
  Let $f(x)=x^d$ be a power function from $\mathbb{F}_{p^n}$ to $\mathbb{F}_{p^n}$ with differential uniformity $\delta$. Denote
\begin{equation*}
  \omega_i=\#\left\{b \in \mathbb{F}_{p^n} \mid \delta(1,b)=i \right\},
\end{equation*}
where $0\leq i\leq\delta$. The differential spectrum of $f(x)=x^d$ is defined as
\begin{equation*}
  \mathbb{DS}=\left\{\omega_i \mid 0\leq i\leq\delta \right\}.
\end{equation*}
\end{definition}
Sometimes we ignore the zeros in $\mathbb{DS}$. We also have two basic identities [\ref{first def}].
\begin{equation}\label{two basic ide}
  \sum\limits_{i=0}^{\delta}\omega_i=p^n \mbox{ and } \sum\limits_{i=0}^{\delta}i\omega_i=p^n.
\end{equation}
These two identities are useful in calculating the differential spectrum of $f$. However, determining the differential spectrum of a power mapping remains theoretically challenging. We list some results in table 1. Differential spectrum of a power mapping has many applications, such as cross-correlation distribution, Walsh transformation, $t$-designs and coding theory. Readers can refer to [\ref{ds and cc}, \ref{tdesigns}-\ref{app code2}, \ref{walsh cc}] for more information.

\begin{table}[!ht]
  \centering
  \caption{Some power functions $f(x)=x^d$ over $\mathbb{F}_{p^n}$ with known differential spectrum.\\}
  \label{table1}
  \begin{tabular}{llcc}
    \toprule
    $d$                                      & Conditions                                                                       & $\delta$ & References           \\
    \hline
$2^t+1$  &  $\gcd(t,n)=s$  & $2^s$ & [\ref{first def},\ref{tab 1}] \\ \hline

$2^n-2$  & $n\geq2$  & $2$ or $4$ & [\ref{first def},\ref{tab 1}] \\ \hline

$2^{2t}-2^t+1$  &  $\gcd(t,n)=s$, $\frac{n}{s}$ odd  & $2^s$ & [\ref{first def}] \\ \hline

$2^{2k}+2^k+1$  &  $n=4k$  & $4$ & [\ref{first def},\ref{tab 2}] \\ \hline

$2^t-1$  &  $t=3, n-2$  & $6$ or $8$ & [\ref{tab 3}] \\ \hline

$2^t-1$  &  $t=n/2, n/2+1$, $n$ even  & $2^{\frac{n}{2}}-2$ or $2^{\frac{n}{2}}$ & [\ref{first def}] \\ \hline

$2^t-1$  &  $t=(n-1)/2, (n+3)/2$, $n$ odd  & $6$ or $8$ & [\ref{tab 4}] \\ \hline

$2^{3l}+2^{2l}+2^l-1$  &  $n=4l$  & $2^{2l}$ & [\ref{tab 5}] \\ \hline

$2^m+2^{(m+1)/2}+1$  &  $n=2m$, $m\geq5$ odd  & $8$ & [\ref{tab 6}] \\ \hline

$2^{m+1}+3$  &  $n=2m$, $m\geq5$ odd  & $8$ & [\ref{tab 6}] \\ \hline

$\frac{2^m-1}{2^k+1}+1$  &  $n=2m$, $\gcd(k,m)=1$  & $2^m$ & [\ref{tab 8}] \\ \hline

$2\cdot3^{\frac{n-1}{2}}+1$  &  $n$ odd  & $4$ & [\ref{ds and cc}] \\ \hline

$\frac{p^k+1}{2}$  &  $\gcd(k,n)=e$  & $\frac{p^e-1}{2}$ or $p^e+1$ & [\ref{choi}] \\ \hline

$\frac{p^n+1}{p^m+1}+\frac{p^n-1}{2}$  & $p\equiv 3\,\,({\rm mod}\,\,4)$, $m\mid n$, $n$ odd  & $\frac{p^m+1}{2}$ & [\ref{choi}] \\ \hline

$p^{2k}-p^k+1$  & $p$ odd, $\gcd(k,n)=e$, $\frac{n}{e}$ odd  & $p^e+1$ & [\ref{lei},\ref{yan lei}] \\ \hline

$p^n-3$  &  $p$ odd, any $n$  & $\leq5$ & [\ref{pn-3},\ref{pn-3 2}] \\ \hline

$p^m+2$  &  $n=2m$, $p\geq3$  & $4$ & [\ref{man}] \\ \hline

$\frac{p^n+3}{2}$  &  $p\geq5$, $p^n\equiv 1\,\,({\rm mod}\,\,4)$ & $3$ & [\ref{qulongjiang}] \\ \hline

$\frac{5^n-3}{2}$  &  $n\geq2$ & $4$ or $5$ & [\ref{yan p5}] \\ \hline

$\frac{p^n-3}{2}$  &  $p^n\geq7$, $p^n\equiv 3\,\,({\rm mod}\,\,4)$, $p^n\neq27$ & $2$ or $3$ & [\ref{yan ds}] \\ \hline

$k(p^m-1)$  &  odd $p$, $n=2m$, $\gcd(k,p^m+1)=1$ & $p^m-2$ & [\ref{kpm-1}] \\ \hline

$\frac{p^n+3}{2}$  & $p^n\equiv 3\,\,({\rm mod}\,\,4)$, $p\neq3$ & $2$ or $4$ & [\ref{yan pn+3}] \\ \hline

$2p^m-1$  &  $n=2m$, $p$ odd & $p^m$ & [\ref{yan niho}] \\ \hline

$\frac{p^n+1}{4}+\frac{p^n-1}{2}$  &  $p^n\equiv 3\,\,({\rm mod}\,\,8)$ or $p^n\equiv 7\,\,({\rm mod}\,\,8)$ & $\leq4$ & [\ref{yan tan},\ref{xia jz}] \\ \hline

$\frac{p^n+1}{4}$  &  $p^n\equiv 3\,\,({\rm mod}\,\,8)$ or $p^n\equiv 7\,\,({\rm mod}\,\,8)$ & $\leq4$ & [\ref{yan tan},\ref{xia jz}] \\ \hline

\tabincell{c}{$d(p^k+1)\equiv2$\\(mod $p^n-1)$}  &  $\gcd(n,k)=e$, $\frac{n}{e}$ odd & $\frac{p^e+1}{2},\frac{p^e-1}{2}$ or $p^e+1$ & [\ref{jssms}] \\ \hline

$p^{2l}-p^{l}+1$ & $n=4l$, any prime $p$ & $p^{2l}-p^l$ & This paper \\
  \bottomrule
  \end{tabular}
\end{table}

To define the capacity of a particular function to withstand multiplicative differential attack [\ref{mult diff}], Ellingsen et al. [\ref{C-diff def}] proposed the concept of $c$-differential uniformity, which is described as follows.
\begin{definition}
Let $a,c\in\mathbb{F}_{p^n}$. For any function $f:\mathbb{F}_{p^n}\rightarrow\mathbb{F}_{p^n}$, the (multiplicative) $c$-derivative of $f$ in respect to any element $a$ is defined as
\begin{equation*}
  _cD_af(x)=f(x+a)-cf(x).
\end{equation*}
Let
\begin{equation*}
  _c\Delta_f(a,b)=\#\left\{x \in \mathbb{F}_{p^n} \mid {_cD_af(x)=b} \right\}.
\end{equation*}
We call
\begin{equation*}
  _c\Delta_f=\max\left\{_c\Delta_f(a,b) \mid a,b \in \mathbb{F}_{p^n}\mbox{ and } (a,c)\neq(0,1) \right\},
\end{equation*}
the $c$-differential uniformity of $f$. If $_c\Delta_f=\delta$, then we say $f$ is differentially $(c, \delta)$-uniform.
\end{definition}
If $_c\Delta_f=1$, then $f$ is called a perfect $c$-nonlinear (PcN) function. If $_c\Delta_f=2$, then $f$ is called an almost perfect $c$-nonlinear (APcN) function. Functions exhibiting low c-differential uniformity demonstrate strong resistance to certain forms of differential cryptanalysis. In order to calculate the $c$-differential uniformity of a power function $f(x)=x^d$ over $\mathbb{F}_{p^n}$ with $c\neq1$, the following result was introduced.
\begin{lemma}\label{c delta f}
([\ref{C-diff uni}]) Let $f(x)=x^d$ be a power function over $\mathbb{F}_{p^n}$. Then
\begin{equation*}
  _c\Delta_f=\max \big\{ \{_c\Delta_f(1,b) :b \in \mathbb{F}_{p^n} \}\cup \{\gcd(d,p^n-1) \} \big\}.
\end{equation*}
\end{lemma}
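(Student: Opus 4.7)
The plan is to split the computation of $_c\Delta_f$ into the two natural cases distinguished by whether $a=0$ or not, and to reduce the nonzero case to the single-parameter family $_c\Delta_f(1,b)$ via a scaling substitution.

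First, I would handle $a\in\mathbb{F}_{p^n}^*$. In the defining equation $(x+a)^d-cx^d=b$, I would substitute $x=ay$ and factor out $a^d$ to obtain $(y+1)^d-cy^d=b/a^d$. Since $x\mapsto ay$ is a bijection of $\mathbb{F}_{p^n}$, this yields the identity
\[
  {}_c\Delta_f(a,b)={}_c\Delta_f\bigl(1,\tfrac{b}{a^d}\bigr).
\]
As $(a,b)$ varies over $\mathbb{F}_{p^n}^*\times\mathbb{F}_{p^n}$, the argument $b/a^d$ sweeps out all of $\mathbb{F}_{p^n}$, so
\[
  \max\bigl\{{}_c\Delta_f(a,b)\;:\;a\in\mathbb{F}_{p^n}^*,\,b\in\mathbb{F}_{p^n}\bigr\}=\max\bigl\{{}_c\Delta_f(1,b)\;:\;b\in\mathbb{F}_{p^n}\bigr\}.
\]

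Next, I would treat $a=0$, which by the constraint $(a,c)\neq(0,1)$ forces $c\neq 1$. Then the equation reduces to $(1-c)x^d=b$, equivalently $x^d=b/(1-c)$. A standard fact about the power map on $\mathbb{F}_{p^n}^*$ gives exactly $\gcd(d,p^n-1)$ preimages for each element of its image and zero for the rest (with the trivial case $x=0$ when $b=0$). Hence the maximum over this regime is precisely $\gcd(d,p^n-1)$, attained for instance at $b=1-c$ where $x^d=1$.

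Taking the overall maximum over the two regimes yields the asserted formula. There is essentially no genuine obstacle: the only point requiring mild care is making sure the scaling substitution is a bijection (handled by $a\neq 0$) and recalling why the number of $d$-th roots in $\mathbb{F}_{p^n}^*$ is exactly $\gcd(d,p^n-1)$ when the target lies in the image.
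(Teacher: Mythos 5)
Your argument is correct: the scaling substitution $x=ay$ for $a\neq 0$ and the reduction of the $a=0$ case (where $(a,c)\neq(0,1)$ forces $c\neq 1$) to counting $d$-th roots are exactly the two ingredients needed. The paper itself states this lemma without proof, citing [\ref{C-diff uni}]; your reasoning reproduces the standard argument given there, so there is nothing further to reconcile.
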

The rest of this paper is organized as follows. In section \ref{Preliminaries}, we introduce some notations and auxiliary tools. In section \ref{ds}, we determine the differential spectrum of $x^{p^{2l}-p^{l}+1}$. We also investigate the $c$-differential uniformity of the power mapping $x^{p^{2l}-p^{l}+1}$. In section \ref{ex code}, we calculate the value distribution of a class of exponential sum. As an application, a class of six-weight cyclic codes were presented. The conclusive remarks are given in section \ref{conclusion}.

For convenience, we fix the following notation in this paper:

\begin{tabular}{ll}
$\mathbb{F}_{p^n}$ & finite field of $p^n$ elements \\
$\mathbb{F}_{p^n}^*$ & multiplicative group of $\mathbb{F}_{p^n}$ \\
$\psi$ & a primitive element of $\mathbb{F}_{p^n}$ \\
$\Tr^n_1$ & trace function from $\mathbb{F}_{p^n}$ to $\mathbb{F}_p$ \\
$\zeta_p$ & primitive complex $p$-th root of unity \\
\end{tabular}
\section{Preliminaries}\label{Preliminaries}
 Before stating the main results, we introduce some notations. Let $e$ be a positive integer and define
\begin{equation*}
  \mu_e=\left\{x\in \mathbb{F}_{p^n}\mid x^e=1 \right\}.
\end{equation*}
If $x\in\mathbb{F}_{p^n}^*$ and $x=\psi^t$ with $0\leq t<p^n-1$, we denote $ind_{\psi}(x)=t$. Let $\mathbb{F}_{p^n}^{\sharp}=\mathbb{F}_{p^n}\setminus \{0,-1\}$. We denote
\begin{equation*}
  S_{r,s}=\{x\in\mathbb{F}_{p^n}^{\sharp}: ind_{\psi}(x+1)\equiv r\,\,({\rm mod}\,\,p^l+1), ind_{\psi}(x)\equiv s\,\,({\rm mod}\,\,p^l+1)\},
\end{equation*}
where $0\leq r,s\leq p^l$, $n=4l$. Obviously,
\begin{equation*}
 \bigsqcup\limits_{0\leq i,j\leq p^l} S_{i,j}=\mathbb{F}_{p^n}^{\sharp}
\end{equation*}
(here $\bigsqcup$ indicates disjoint union).

The following result plays a crucial role in this paper.
\begin{lemma}\label{luo wang}
([\ref{luo 111}, Theorem 1])
Let $p$ be a prime number, $n=2ks$ with $k$ and $s$ positive integers. Let $n_1$ and $n_2$ be two positive integers, $t=\gcd(n_1,n_2)$, $lcm(n_1,n_2)\mid p^s+1$. Let $0\leq r_1\leq n_1-1$ and $0\leq r_2\leq n_2-1$. Let $N_{p^n}(\chi)$ be the number of $\mathbb{F}_{p^n}$-rational points on the affine curve
\begin{equation*}
  \chi: \alpha x^{n_1}+\beta y^{n_2}+1=0.
\end{equation*}
 For $\alpha\in C_{1,r_1}=\left\{\psi^{r_1+n_1w}\mid 0\leq w\leq \frac{p^n-1}{n_1}-1 \right\}$,
 $\beta\in C_{2,r_2}=\left\{\psi^{r_2+n_2w}\mid 0\leq w\leq \frac{p^n-1}{n_2}-1 \right\}$. Then we have
\begin{itemize}
	\item[(i)] if $r_1=r_2=0$, then $N_{p^n}(\chi)=p^n+(-1)^{k-1}((n_1-1)(n_2-1)+1-t)p^{\frac{n}{2}}-t+1$.
	\item[(ii)] if $r_1=0$, $r_2\neq0$, $t\nmid r_2$, then $N_{p^n}(\chi)=p^n+(-1)^k(n_1-2)p^{\frac{n}{2}}+1$.
	\item[(iii)] if $r_1\neq0$, $r_2=0$, $t\nmid r_1$, then $N_{p^n}(\chi)=p^n+(-1)^k(n_2-2)p^{\frac{n}{2}}+1$.
	\item[(iv)] if $r_1\neq0$, $r_2\neq0$ and $t\nmid r_1-r_2$, then $N_{p^n}(\chi)=p^n+(-1)^{k-1}2p^{\frac{n}{2}}+1$.
	\item[(v)] if $r_1\neq0$, $r_2\neq0$ and $t\mid r_1-r_2$, then $N_{p^n}(\chi)=p^n+(-1)^{k}(t-2)p^{\frac{n}{2}}-t+1$.
\end{itemize}
\end{lemma}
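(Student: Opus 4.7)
The plan is to convert $N_{p^n}(\chi)$ into a multiplicative-character sum, decoupling the $x$ and $y$ variables. Counting the fibers $\{x:x^{n_1}=u\}$ by means of multiplicative characters of order dividing $n_1$, and similarly for $y^{n_2}$, one arrives at
\begin{equation*}
N_{p^n}(\chi) \;=\; p^n \;+\; \sum_{\substack{\chi_1^{n_1}=\chi_0,\ \chi_1\neq\chi_0 \\ \chi_2^{n_2}=\chi_0,\ \chi_2\neq\chi_0}} \overline{\chi_1}(\alpha)\,\overline{\chi_2}(\beta)\, J(\chi_1,\chi_2) \;+\; (\text{boundary corrections}),
\end{equation*}
where $\chi_0$ is the trivial character on $\mathbb{F}_{p^n}^{*}$ and $J(\chi_1,\chi_2)$ is the Jacobi sum. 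The boundary corrections handle the degenerate pairs where one of $\chi_1,\chi_2$ is trivial (producing the constant "$+1$" term in each formula) and separate out the diagonal pairs $\chi_2=\chi_1^{-1}$ (which will be responsible for the "$-t+1$").

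The key input is the evaluation of $J(\chi_1,\chi_2)$ under the hypothesis $\mathrm{lcm}(n_1,n_2)\mid p^s+1$. Every non-trivial character in play has order dividing $p^s+1$, so $\chi_i\circ \mathrm{Frob}^{s} = \overline{\chi_i}$, placing $\chi_i$ squarely in the semi-primitive regime. A Stickelberger-type computation then gives $G(\chi_i)=(-1)^{k-1}p^{n/2}$ for $n=2ks$. Via the identity $G(\chi_1)G(\chi_2)=J(\chi_1,\chi_2)\,G(\chi_1\chi_2)$ valid whenever $\chi_1\chi_2\neq\chi_0$, this yields the uniform value $J(\chi_1,\chi_2)=(-1)^{k-1}p^{n/2}$ for generic pairs, while for diagonal pairs one uses the classical $J(\chi_1,\chi_1^{-1})=-\chi_1(-1)$.

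The five cases in the statement then arise from a combinatorial sorting of which pairs $(\chi_1,\chi_2)$ give a non-vanishing twist $\overline{\chi_1}(\alpha)\overline{\chi_2}(\beta)$. The twist $\overline{\chi_1}(\alpha)$ depends only on the coset class of $\alpha$ modulo the image of the $n_1$-th power map; summed against all $n_1-1$ non-trivial $\chi_1$ it equals $n_1-1$ when $r_1=0$ and otherwise contributes only via specific characters pinned by the coset constraint. The same dichotomy applies to $\beta,r_2$. Crossing the two produces cases (i)--(iii): when both constraints are trivial one gets the full $(n_1-1)(n_2-1)p^{n/2}$ contribution from generic pairs together with the diagonal correction $-(t-1)p^{n/2}$; when only one coset constraint is trivial the diagonal pairs are either selected or killed, leaving $(n_i-2)p^{n/2}$. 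The split between (iv) and (v) is exactly the criterion that the diagonal $\chi_2=\chi_1^{-1}$ yields a non-trivial twist $\overline{\chi_1}(\alpha/\beta)$: this twist averages to zero unless $\alpha/\beta$ lies in the image of the $t$-th power map, i.e.\ $t\mid r_1-r_2$.

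The principal obstacle will be the sign bookkeeping. One must consistently track the $(-1)^{k-1}$ factor coming from the semi-primitive Gauss sums, the additional $-\chi_1(-1)$ sign from diagonal Jacobi sums, and the cancellations between the boundary corrections and the character-sum contributions, so as to produce the precise coefficients $(-1)^{k-1}$ in cases (i),(iv) and $(-1)^{k}$ in cases (ii),(iii),(v). Verifying the semi-primitive Gauss-sum formula uniformly in the prime $p$ (including $p=2$) and confirming that the "$-t+1$" correction survives correctly in every case is the calculation-heavy part of the argument.
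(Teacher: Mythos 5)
The paper does not prove this lemma: it is quoted verbatim as Theorem~1 of the cited reference of Wang and Luo, so there is no in-paper argument to compare yours against. Your outline is the standard (and surely the source's) route: expand the fibre counts in multiplicative characters of order dividing $n_1$ and $n_2$, reduce to Jacobi sums, evaluate these in the semi-primitive regime forced by $\mathrm{lcm}(n_1,n_2)\mid p^s+1$, and sort the surviving twists by the coset data $r_1,r_2$ and the divisibility $t\mid r_1-r_2$. The combinatorial part of your plan is sound; granting the uniform value $J(\chi_1,\chi_2)=(-1)^{k-1}p^{n/2}$ for non-diagonal pairs of nontrivial characters and $J(\chi_1,\chi_1^{-1})=-\chi_1(-1)=-1$ on the diagonal, your sorting does reproduce all five formulas, including the $-t+1$ constants.

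The gap sits exactly where you flag it, but it is worse than mere sign bookkeeping: the intermediate formula you assert, $G(\chi_i)=(-1)^{k-1}p^{n/2}$, is false for odd $p$. The semi-primitive evaluation carries an extra factor $(-1)^{m(p^j+1)/d}$, where $d$ is the order of the character, $j$ is minimal with $d\mid p^j+1$, and $m=n/(2j)$; for instance, a character of order $4$ on $\mathbb{F}_9$ (so $p=3$, $s=k=1$) has Gauss sum $-3$, not $(-1)^{k-1}p^{n/2}=+3$. What actually has to be proved is that these extra signs cancel in every quotient $G(\chi_1)G(\chi_2)/G(\chi_1\chi_2)$ arising here, uniformly over all admissible non-diagonal pairs, so that the Jacobi sums are all $(-1)^{k-1}p^{n/2}$ even though the individual Gauss sums are not. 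That cancellation is the entire mathematical content of the lemma, and your proposal reduces the lemma to it without carrying it out; until it is done (or the statement is simply cited, as this paper does), the argument is incomplete.
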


\begin{lemma}\label{quad eq lemma}
([\ref{Tu}]). Let $n=2m$ be an even positive integer and $a,b\in\mathbb{F}_{2^n}^*$. Then the quadratic equation $x^2+ax+b=0$ has both solutions in $\mu_{2^m+1}$ if and only if
\begin{equation*}
  b=a^{1-2^m} \mbox{ and } \Tr^m_1(a^{-2^m-1})=1.
\end{equation*}
\end{lemma}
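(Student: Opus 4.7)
My approach will be Galois-theoretic, exploiting the fact that $\sigma:x\mapsto x^{2^m}$ generates $\mathrm{Gal}(\mathbb{F}_{2^n}/\mathbb{F}_{2^m})$ and that $\mu_{2^m+1}$ is precisely the subgroup of norm-one elements, i.e.\ the set of $x\in\mathbb{F}_{2^n}^*$ with $\sigma(x)=x^{-1}$, where $N:=N_{\mathbb{F}_{2^n}/\mathbb{F}_{2^m}}$ satisfies $N(x)=x\sigma(x)=x^{2^m+1}$. This characterization will let me translate the ``roots in $\mu_{2^m+1}$'' condition into Vieta-type relations involving $\sigma$.

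For the forward direction, I will suppose the roots $\alpha,\beta$ of $x^2+ax+b$ lie in $\mu_{2^m+1}$, so $\alpha^{2^m}=\alpha^{-1}$ and $\beta^{2^m}=\beta^{-1}$. Applying $\sigma$ to $\alpha+\beta=a$ and using $\alpha\beta=b$ will give
\[ a^{2^m}=\alpha^{-1}+\beta^{-1}=\frac{\alpha+\beta}{\alpha\beta}=\frac{a}{b}, \]
hence $b=a^{1-2^m}$. To obtain the trace condition, and for the reverse direction to construct the roots, I will substitute $x=ay$ to rewrite the equation as $y^2+y+c=0$ where $c:=b/a^2=a^{-(2^m+1)}=N(a)^{-1}\in\mathbb{F}_{2^m}$. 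Since $c\in\mathbb{F}_{2^m}$, we have $\Tr^n_1(c)=\Tr^m_1(\Tr^n_m(c))=\Tr^m_1(c+c)=0$, so the reduced Artin--Schreier equation is automatically solvable in $\mathbb{F}_{2^n}$.

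The crux will then be a case split on $\Tr^m_1(c)$. If $\Tr^m_1(c)=0$, the roots $y_0,y_0+1$ lie in $\mathbb{F}_{2^m}$; then $N(ay_0)=N(a)\,y_0^2$, and requiring this to equal $1$ forces $y_0^2=c$, which combined with $y_0^2+y_0+c=0$ yields $y_0=0$ and hence $c=0$, a contradiction. If $\Tr^m_1(c)=1$, the roots are Galois conjugates over $\mathbb{F}_{2^m}$, so $y_0^{2^m}=y_0+1$ and $N(y_0)=y_0(y_0+1)=y_0^2+y_0=c$, giving $N(ay_0)=N(a)\cdot c=1$ and analogously $N(a(y_0+1))=1$. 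Thus both roots lie in $\mu_{2^m+1}$ exactly in the second case, matching the condition $\Tr^m_1(a^{-(2^m+1)})=1$.

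\textbf{Main obstacle.} The delicate step is interpreting the trace hypothesis correctly. One might naively read $\Tr^m_1(a^{-(2^m+1)})=1$ as a solvability-type condition, but solvability in $\mathbb{F}_{2^n}$ is automatic here because $c\in\mathbb{F}_{2^m}$ forces $\Tr^n_1(c)=0$; the real role of the trace hypothesis is to push the roots \emph{out} of the subfield $\mathbb{F}_{2^m}$, so that they form a Galois-conjugate pair whose norm equals $c$ and thereby cancels $N(a)$. Distinguishing solvability from residence in the correct subgroup, and handling the degenerate subfield case cleanly, is where the argument will require the most attention.
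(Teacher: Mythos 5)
The paper does not actually prove this lemma --- it is quoted from [\ref{Tu}] without argument --- so there is no in-paper proof to compare against; your write-up has to stand on its own, and it does. The key identifications all check out: $\mu_{2^m+1}$ is exactly the kernel of the norm $N(x)=x^{2^m+1}$ (equivalently $x^{2^m}=x^{-1}$); applying Frobenius to $\alpha+\beta=a$ with $\alpha^{2^m}=\alpha^{-1}$, $\beta^{2^m}=\beta^{-1}$ gives $a^{2^m}=(\alpha+\beta)/(\alpha\beta)=a/b$, hence $b=a^{1-2^m}$; the substitution $x=ay$ reduces the equation to $y^2+y+c=0$ with $c=b/a^2=a^{-(2^m+1)}=N(a)^{-1}\in\mathbb{F}_{2^m}^*$, which is always solvable in $\mathbb{F}_{2^n}$ since $\Tr^n_m(c)=c+c=0$; and the dichotomy on $\Tr^m_1(c)$ correctly separates the subfield case (where $N(ay_0)=N(a)y_0^2=1$ forces $y_0^2=c$ and then $y_0=0$, contradicting $c\neq0$) from the irreducible case (where $y_0^{2^m}=y_0+1$, so $N(y_0)=y_0^2+y_0=c$ cancels $N(a)$ exactly, for both roots). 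Your closing diagnosis is also the right one: the trace condition does not govern existence of roots, which is automatic, but whether they escape $\mathbb{F}_{2^m}$ so as to land in the norm-one subgroup. Two minor presentational points if you convert the plan into a finished proof: note explicitly that $a\neq0$ forces the two roots to be distinct (so ``both solutions'' is unambiguous), and state the Artin--Schreier criterion ($y^2+y=c$ solvable over $\mathbb{F}_{q}$ of characteristic $2$ iff the absolute trace of $c$ vanishes) that you invoke twice. Nothing of substance is missing.
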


\section{Differential properties of $f(x)=x^{p^{2l}-p^{l}+1}$ over $\mathbb{F}_{p^n}$}\label{ds}

\subsection{Differential spectrum}
Let $d={p^{2l}-p^{l}+1}$, where $n=4l$, $l$ is a positive integer and $p$ is a prime number. In this subsection, we determine the differential spectrum of $f(x)=x^{d}$ over $\mathbb{F}_{p^n}$, since the differential spectrum of $x^{p^{3l}-p^{2l}+p^l}$ and $x^{p^{2l}-p^{l}+1}$  are identical. To determine the differential spectrum  of $f$, we mainly study the number of solutions in $\mathbb{F}_{p^n}$ of
\begin{equation}\label{x+1-x=b}
  (x+1)^d-x^d=b
\end{equation}
for any $b\in \mathbb{F}_{p^n}$. Denote by $\delta(b)$ the number of solutions of (\ref{x+1-x=b}) in $\mathbb{F}_{p^n}$. We have the following result.
\begin{lemma}\label{first lemma}
With the notations above, we have
\begin{itemize}
\item[(i)] $\delta(1)=p^l$.
\item[(ii)] if $b^{p^l+1}=1$ and $b\neq1$, then $\delta(b)=p^{2l}-p^l$.
\item[(iii)] if $b^{p^l+1}\neq1$, then $\delta(b)=0$ or $2$.
\end{itemize}
\end{lemma}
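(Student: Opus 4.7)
The plan is to start by reducing the equation $(x+1)^d - x^d = b$ to a cleaner polynomial form. For $x \neq -1$, multiplying both sides by $(x+1)^{p^l} = x^{p^l}+1$ and using $d+p^l = p^{2l}+1$ yields the equivalent relation
\[
(\ast)\qquad x^{p^{2l}} - x^d + x + 1 = b(x^{p^l}+1).
\]
Since $d\equiv 1\pmod{p^l-1}$, for $x \in \mathbb{F}_{p^l}$ this collapses to $x+1 = b(x+1)$, which holds iff $b=1$ or $x=-1$. Hence all $p^l$ elements of $\mathbb{F}_{p^l}$ solve the equation when $b=1$, and none do when $b\neq 1$. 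For part (i) I will finish by rewriting $(\ast)$ with $b=1$ as $(x^{p^l-1}-1)(x^d-x)=0$; since $\gcd(d-1,p^n-1) = p^l-1$, both factors force $x\in\mathbb{F}_{p^l}^*$, so no Case-B solutions arise and $\delta(1) = p^l$.

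For parts (ii) and (iii) I assume $b\neq 1$ and focus on $x\notin\mathbb{F}_{p^l}$. Raising $(\ast)$ to the $p^l$-th power and combining via $(x+1)^{d(p^l+1)}=(x+1)^{p^{3l}+1}$ eliminates the $x^{dp^l}$ term, producing the necessary condition
\[
(@)\qquad (b-1)x^{p^{3l}}+(1-b^{p^l+1})x^{p^{2l}}+(b^{p^l}-1)x^d=0,
\]
which after the substitution $w = x^{p^l-1}$ (dividing by $x^{p^{2l}}$) becomes
\[
(@@)\qquad (b-1)w^{p^{2l}+1}+(1-b^{p^l+1})w+(b^{p^l}-1)=0.
\]
For part (iii), $b\notin\mu_{p^l+1}$ keeps all three coefficients nonzero. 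Since $w^{p^{2l}+1}\in\mathbb{F}_{p^{2l}}$, applying $F^{2l}$ to $(@@)$ and eliminating $w^{p^{2l}+1}$ between the two resulting relations yields a quadratic in $w$, so at most two $w$-values occur. For each such $w$, the $p^l-1$ preimages under $x\mapsto x^{p^l-1}$ form a single $\mathbb{F}_{p^l}^*$-coset $\{\lambda x_0\}$; substituting $\lambda x_0$ into $(\ast)$ and using $\lambda^{p^l}=\lambda^d=\lambda$ for $\lambda\in\mathbb{F}_{p^l}^*$ gives $(\lambda-1)(b-1)=0$, forcing $\lambda=1$. Thus $\delta(b)\le 2$, and the involution $x\mapsto -x-1$ (odd $p$) or $x\mapsto x+1$ (characteristic $2$) preserves solutions without fixed points when $b\neq 1$, so $\delta(b)\in\{0,2\}$.

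For part (ii), the upper bound $\delta(b)\le p^{2l}-p^l$ is immediate from $\deg\!\bigl((x+1)^d-x^d\bigr)=d-1=p^{2l}-p^l$. When $b\in\mu_{p^l+1}\setminus\{1\}$ the middle coefficient of $(@@)$ vanishes, leaving $w^{p^{2l}+1}=1/b$, i.e., $N(x)^{p^l-1}=1/b$ for $N(x)=x^{p^{2l}+1}\in\mathbb{F}_{p^{2l}}^*$; substituting the consequent identity $x^d = bx^{p^{3l}}$ back into $(\ast)$ produces the supplementary constraint $bT(x)=T(x)^{p^l}$ for $T(x)=x^{p^{3l}}+x^{p^l}+1\in\mathbb{F}_{p^{2l}}$. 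By Hilbert 90 the admissible $N_0$- and $T_0$-sets have sizes $p^l-1$ and $p^l$ (the latter including $T_0=0$); parametrizing the $N$-fiber $\{x:N(x)=N_0\}$ of size $p^{2l}+1$ as $c\zeta$ with $\zeta\in\mu_{p^{2l}+1}$ converts the condition $T(x)=T_0$ into a quadratic in $\eta=\zeta^{p^l}$. The main obstacle will be verifying that for each of the $(p^l-1)\,p^l=p^{2l}-p^l$ admissible pairs $(N_0,T_0)$ the quadratic contributes exactly one root in $\mu_{p^{2l}+1}$, which would saturate the upper bound and yield $\delta(b)=p^{2l}-p^l$.
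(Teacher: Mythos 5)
Your parts (i) and (iii) are essentially correct and take a genuinely different, more elementary route than the paper: the paper splits $\mathbb{F}_{p^n}^{\sharp}$ into the $(p^l+1)^2$ classes $S_{r,s}$, linearizes the equation on each class, and invokes the rational-point count of Lemma \ref{luo wang} for the diagonal classes, whereas your factorization $(x^{p^l-1}-1)(x^d-x)=0$ settles (i) with no point counting at all, and your reduction to a quadratic in $w=x^{p^l-1}$ plus the fixed-point-free involution $x\mapsto -x-1$ gives (iii). In (iii) you should make explicit that eliminating $w^{p^{2l}+1}$ between $(@@)$ and its $p^{2l}$-th power leaves a relation that is linear in $w$ \emph{and} $w^{p^{2l}}$; you must additionally substitute $w^{p^{2l}}=\bigl((b^{p^l+1}-1)w-(b^{p^l}-1)\bigr)/\bigl((b-1)w\bigr)$ from $(@@)$ to obtain an honest quadratic, and check that its leading coefficient $(b^{p^{2l}}-1)(b^{p^l+1}-1)$ is nonzero. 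Both points are routine for $b\notin\mu_{p^l+1}$, so this is only an expository gap.

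Part (ii), however, contains a genuine gap, and the step you flag as ``the main obstacle'' is not merely unverified but false as stated. In your quadratic $c^{p^l}\eta^2+(1-T_0)\eta+c^{p^{3l}}=0$ the product of the two roots is $c^{p^{3l}-p^l}$, which lies in $\mu_{p^{2l}+1}$ since $\bigl(c^{p^{3l}-p^l}\bigr)^{p^{2l}+1}=c^{p^l(p^{4l}-1)}=1$; hence if one root lies in $\mu_{p^{2l}+1}$ so does the other, and each admissible pair $(N_0,T_0)$ contributes $0$ or $2$ solutions (a contribution of $1$ requires a double root, which cannot occur for all pairs: in characteristic $2$ it forces $T_0=1$, which is inadmissible, and in odd characteristic the discriminant $(1-T_0)^2-4N_0^{p^l}$ vanishes for at most two of the $p^l$ admissible $T_0$). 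A concrete check: for $p=2$, $l=1$, $\theta^4=\theta+1$, $b=\theta^5$, the two solutions of $(x+1)^3+x^3=b$ are $\theta^2$ and $\theta^8$, both with $T(x)=0$, so the pair $(N_0,T_0)=(\theta^{10},0)$ contributes two roots and $(\theta^{10},b)$ contributes none. The true statement is that exactly half of the $p^{2l}-p^l$ admissible pairs contribute two roots, and proving \emph{that} is the whole difficulty of (ii): it is precisely the count the paper outsources to Lemma \ref{luo wang}, and in your framework it would require something like Lemma \ref{quad eq lemma} or a character-sum argument to control how often the quadratic splits in $\mu_{p^{2l}+1}$. Your degree bound $\delta(b)\le p^{2l}-p^l$ is fine, but the matching lower bound cannot be reached by the one-root-per-pair saturation you propose.
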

\begin{proof}
Let $\triangle(x)=(x+1)^{d}-x^{d}$ and $\alpha=\psi^{\frac{p^n-1}{p^l+1}}$. If $x\in\mathbb{F}_{p^n}^{\sharp}$, we can distinguish $(p^l+1)^2$ disjoint cases:
\begin{table}[H]
\centering
 \caption{Equation $(\ref{x+1-x=b})$ with $x\in\mathbb{F}_{p^n}^{\sharp}$\\}
\label{table2}
\begin{tabular}{|l|c|c|c|c|c|}
\hline
&Set & Equation $(\ref{x+1-x=b})$ & $x+1$& $x$\\
[0.5ex]
\hline
Case 1& $S_{0,0}$ & $1=b$ & To Be Determined & To Be Determined \\
\hline
Case 2& $S_{0,1}$ & $x+1-\alpha x=b$ & $\frac{b-\alpha}{1-\alpha}$ & $\frac{b-1}{1-\alpha}$\\
\hline
\vdots & \vdots  & \vdots  & \vdots  & \vdots  \\
\hline
Case $p^l+1$ & $S_{0,p^l}$ & $x+1-\alpha^{p^l} x=b$
 & $\frac{b-\alpha^{p^l}}{1-\alpha^{p^l}}$
& $\frac{b-1}{1-\alpha^{p^l}}$\\
\hline
Case $p^l+2$ & $S_{1,0}$ & $\alpha(x+1)-x=b$
 & $\frac{b-1}{\alpha-1}$
& $\frac{b-\alpha}{\alpha-1}$\\
\hline
Case $p^l+3$ & $S_{1,1}$ & $\alpha=b$
 & To Be Determined & To Be Determined \\
\hline
\vdots & \vdots  & \vdots  & \vdots  & \vdots  \\
\hline
Case $2(p^l+1)$ & $S_{1,p^l}$ & $\alpha(x+1)-\alpha^{p^l}x=b$
 & $\frac{b-\alpha^{p^l}}{\alpha-\alpha^{p^l}}$
& $\frac{b-\alpha}{\alpha-\alpha^{p^l}}$\\
\hline
\vdots & \vdots  & \vdots  & \vdots  & \vdots  \\
\hline
Case $p^l(p^l+1)+1$ & $S_{p^l,0}$ & $\alpha^{p^l}(x+1)-x=b$
 & $\frac{b-1}{\alpha^{p^l}-1}$
& $\frac{b-\alpha^{p^l}}{\alpha^{p^l}-1}$\\
\hline
Case $p^l(p^l+1)+2$ & $S_{p^l,1}$ & $\alpha^{p^l}(x+1)-\alpha x=b$
 & $\frac{b-\alpha}{\alpha^{p^l}-\alpha}$
& $\frac{b-\alpha^{p^l}}{\alpha^{p^l}-\alpha}$\\
\hline
\vdots & \vdots  & \vdots  & \vdots  & \vdots  \\
\hline
Case $(p^l+1)^2$ & $S_{p^l,p^l}$ & $\alpha^{p^l}=b$
 &  To Be Determined & To Be Determined \\
\hline
\end{tabular}
\end{table}

(i) Note that $\triangle(x)=1$ has no solution in $\bigsqcup\limits_{1\leq i\leq p^l} S_{i,i}$. If $\triangle(x)=1$ has one solution $x_1\in S_{j_1,j_2}$, where $j_1\neq j_2$ and  $j_1,j_2\neq0$, the
\begin{equation*}
  x_1=\frac{1-\alpha^{j_1}}{\alpha^{j_1}-\alpha^{j_2}} \mbox{ and }  x_1+1=\frac{1-\alpha^{j_2}}{\alpha^{j_1}-\alpha^{j_2}}.
\end{equation*}
Notice that $\alpha^i\in \mu_{p^l+1}$ for $i=1,2,\cdots,p^l$ which implies $\alpha^{ip^l}=\alpha^{ip^{3l}}=\alpha^{-i}$ and $\alpha^{ip^{2l}}=\alpha^i$. Then we have
\begin{equation*}
  x_1^{p^{3l}-p^{2l}+p^l-1}=\left(\frac{1-\alpha^{j_1}}{\alpha^{j_1}-\alpha^{j_2}}\right)^{p^{3l}-p^{2l}+p^l-1}
  =\alpha^{2j_2}=\alpha^{j_2},
\end{equation*}
\begin{equation*}
  (x_1+1)^{p^{3l}-p^{2l}+p^l-1}=\left(\frac{1-\alpha^{j_2}}{\alpha^{j_1}-\alpha^{j_2}}\right)^{p^{3l}-p^{2l}+p^l-1}
  =\alpha^{2j_1}=\alpha^{j_1}.
\end{equation*}
But this contradicts ${j_1\neq j_2}$. So $\triangle(x)=1$ has no solution in $\mathbb{F}_{p^n}^{\sharp}\setminus S_{0,0}$. If $x\in S_{0,0}$, then $\triangle(x)=1$ is equivalent to the system of equations
\begin{equation}\label{001b}
    \left \{
\begin{array}{lll}
y-x=1,\\
ind_{\psi}(y)=ind_{\psi}(x)\equiv 0\,\,({\rm mod}\,\,p^l+1).
\end{array}\right.
\end{equation}
By Lemma \ref{luo wang}, we have $p^l-2$ solutions to (\ref{001b}). We also get $\triangle(0)=1$ and $\triangle(-1)=1$. In total, $\delta(1)=p^l$.

(ii) We only show the case $b=\alpha$. The other cases for $b=\alpha^k$, $2\leq k\leq p^l$ are similar and we omit them. Note that $\triangle(x)=\alpha$ has no solution in $\bigsqcup\limits_{\substack {0\leq i\leq p^l,\\i\neq1}} S_{i,i}$. If $\triangle(x)=\alpha$ has one solution $y_1\in S_{k_1,k_2}$ with $k_1\neq k_2$ and $k_1,k_2\neq1$, then
\begin{equation*}
  y_1=\frac{\alpha-\alpha^{k_1}}{\alpha^{k_1}-\alpha^{k_2}} \mbox{ and } y_1+1=\frac{\alpha-\alpha^{k_2}}{\alpha^{k_1}-\alpha^{k_2}}.
\end{equation*}
Then we obtain
\begin{equation*}
  y_1^{p^{3l}-p^{2l}+p^l-1}=\left(\frac{\alpha-\alpha^{k_1}}{\alpha^{k_1}-\alpha^{k_2}}\right)^{p^{3l}-p^{2l}+p^l-1}
  =\left(\frac{\alpha^{p^l}\alpha^{k_1}-1}{\alpha-\alpha^{k_1}}\right)^2\alpha^{2k_2}=\alpha^{k_2},
\end{equation*}
\begin{equation*}
  (y_1+1)^{p^{3l}-p^{2l}+p^l-1}=\left(\frac{\alpha-\alpha^{k_2}}{\alpha^{k_1}-\alpha^{k_2}}\right)^{p^{3l}-p^{2l}+p^l-1}
  =\left(\frac{\alpha^{p^l}\alpha^{k_2}-1}{\alpha-\alpha^{k_2}}\right)^2\alpha^{2k_1}=\alpha^{k_1},
\end{equation*}
which yields
\begin{equation}\label{fuza6}
  \alpha^{k_1}+\alpha^{k_2}=\alpha^{-2}\alpha^{k_1}\alpha^{k_2}+2\alpha-1.
\end{equation}
Taking $p^{l+1}$-th power on both sides of (\ref{fuza6}), it implies
\begin{equation}\label{fuza7}
\alpha^{k_1}\alpha^{k_2}=\alpha^{2},
\end{equation}
which yields
\begin{equation*}
    \left \{
\begin{array}{lll}
\alpha^{k_1}\alpha^{k_2}=\alpha^2,\\
\alpha^{k_1}+\alpha^{k_2}=2\alpha.
\end{array}\right.
\end{equation*}
Hence, $\alpha^{k_1}=\alpha^{k_2}=\alpha$, which contradicts ${k_1\neq k_2}$. So $\triangle(x)=\alpha$ has no solution in $\mathbb{F}_{p^n}^{\sharp}\setminus S_{1,1}$. If $x\in S_{1,1}$, then $\triangle(x)=\alpha$ is equivalent to the system of equations
\begin{equation}\label{001bpsi}
    \left \{
\begin{array}{lll}
\psi y-\psi x=1,\\
ind_{\psi}(y)=ind_{\psi}(x)\equiv 0\,\,({\rm mod}\,\,p^l+1).\\
\end{array}\right.
\end{equation}
By Lemma \ref{luo wang}, we have $p^{2l}-p^l$ solutions to (\ref{001bpsi}). We also get $\triangle(0)\neq\alpha$ and $\triangle(-1)\neq\alpha$. In total, $\delta(\alpha)=p^{2l}-p^l$.

(iii) Note that
\begin{equation}\label{dpl+1}
  d(p^l+1)\equiv p^l+1\,\,({\rm mod}\,\,p^{4l}-1).
\end{equation}
Using (\ref{x+1-x=b}) and (\ref{dpl+1}), one obtains
\begin{equation}\label{imp}
  (x+1)^{p^l+1}=(x^d+b)^{p^l+1}.
\end{equation}
According to (\ref{imp}), we have
\begin{equation}\label{r1u1}
  x^d+b=\gamma_1(x+1),
\end{equation}
for some $\gamma_1\in \mu_{p^l+1}$.
Taking $(p^l+1)$-th power on both sides of (\ref{r1u1}), we have
\begin{equation*}
  x^{p^l+1}=(\gamma_1(x+1)-b)^{p^l+1}.
\end{equation*}
Then we obtain
\begin{equation}\label{r2u2}
  \gamma_2x=\gamma_1(x+1)-b,
\end{equation}
for some $\gamma_2\in \mu_{p^l+1}$.
When $\gamma_1=\gamma_2$, we obtain $b=\gamma_1\in \mu_{p^l+1}$. This situation has been discussed in (i).
When $\gamma_1\neq\gamma_2$, we have $x=\frac{b-\gamma_1}{\gamma_1-\gamma_2}$ which is a solution of (\ref{x+1-x=b}) if and only if
\begin{equation}\label{fuzaxd}
  \left(\frac{b-\gamma_2}{\gamma_1-\gamma_2}\right)^d-
  \left(\frac{b-\gamma_1}{\gamma_1-\gamma_2}\right)^d=b.
\end{equation}
Notice that $\gamma_i\in \mu_{p^l+1}$ for $i=1,2$ which implies $\gamma_i^{p^l}=\gamma_i^{p^{3l}}=\gamma_i^{-1}$ and $\gamma_i^{p^{2l}}=\gamma_i$. Then by a detailed calculation one can obtain that
\begin{equation*}
  (b-\gamma_i)^d
=\frac{b^{p^{3l}+p^l}-(b^{p^{3l}}+b^{p^l})\gamma_i^{p^l}+\gamma_i^{2p^l}}{b^{p^{2l}}-\gamma_i}
\end{equation*}
for any $b^{p^{2l}}\neq\gamma_i$ and from which one can also have
\begin{equation*}
  (\gamma_1-\gamma_2)^d=\frac{\gamma_1-\gamma_2}{\gamma_1^2\gamma_2^2}.
\end{equation*}
Then (\ref{fuzaxd}) can be rewritten as
\begin{equation}\label{fuza2}
\begin{aligned}
  &(\gamma_2-\gamma_1)b^{p^{3l}+p^l}+(b^{p^{3l}+p^{2l}}
  +b^{p^{2l}+p^l}+b)\left(\frac{1}{\gamma_1}-\frac{1}{\gamma_2}\right)
  +(b^{p^{3l}}+b^{p^l})\left(\frac{\gamma_1}{\gamma_2}-\frac{\gamma_2}{\gamma_1}\right)\\
  &+(b^{p^{2l}}+b^{p^{2l}+1})\left(\frac{1}{\gamma_2^2}-\frac{1}{\gamma_1^2}\right)
  +\frac{\gamma_2}{\gamma_1^2}-\frac{\gamma_1}{\gamma_2^2}
  -b^{2p^{2l}+1}\left(\frac{1}{\gamma_1\gamma_2^2}-\frac{1}{\gamma_2\gamma_1^2}\right)=0.
\end{aligned}
\end{equation}
Taking $p^{2l}$-th power on both sides of (\ref{fuza2}) and subtracting these two equations lead to
\begin{equation}\label{fuza3}
(b^{p^{2l}}-b)\left[(b^{p^{3l}}+b^{p^{l}}-1)\left(\frac{1}{\gamma_1}-\frac{1}{\gamma_2}\right)
+\left(\frac{1}{\gamma_2^2}-\frac{1}{\gamma_1^2}\right)
+b^{p^{2l}+1}\left(\frac{1}{\gamma_1\gamma_2^2}-\frac{1}{\gamma_2\gamma_1^2}\right)\right]=0.
\end{equation}

\emph{Case 1:} $b^{p^{2l}}\neq b$. Then by ($\ref{fuza3}$) one gets
\begin{equation}\label{fuza4}
  \gamma_1+\gamma_2=(b^{p^{3l}}+b^{p^{l}}-1)(\gamma_1\gamma_2)-b^{p^{2l}+1}.
\end{equation}
Taking $p^{l+1}$-th power on both sides of (\ref{fuza4}), it implies
\begin{equation}\label{fuza5}
(b^{p^{3l}}-1)(1-b^{p^l})(\gamma_1\gamma_2)^2
+(b^{p^{2l}}-1)(b-1)\gamma_1\gamma_2=0.
\end{equation}
Then we have
\begin{equation*}
    \left \{
\begin{array}{lll}
\gamma_1\gamma_2=(b^{p^{3l}}-1)(1-b^{p^l})^{p^l-1},\\
\gamma_1+\gamma_2
=(b^{p^{3l}}+b^{p^{l}}-1)(b^{p^{3l}}-1)(1-b^{p^l})^{p^l-1}-b^{p^{2l}+1},\\
\end{array}\right.
\end{equation*}
i.e., $\gamma_1,\gamma_2$ are two solutions of
\begin{equation*}
  x^2-\big((b^{p^{3l}}+b^{p^{l}}-1)(b^{p^{3l}}-1)(1-b^{p^l})^{p^l-1}-b^{p^{2l}+1}\big)
  x+(b^{p^{3l}}-1)(1-b^{p^l})^{p^l-1}=0.
\end{equation*}
Hence, when $\gamma_1\neq\gamma_2$, $b\not\in \mu_{p^l+1}$ and $b^{p^{2l}}\neq b$, the equation (\ref{x+1-x=b}) has either no solution or two solutions.

\emph{Case 2:} $b^{p^{2l}}= b$. For a fixed $b$, if $\triangle(x_2)=b$, where
\begin{equation*}
  x_2=\frac{b-\alpha^{j_3}}{\alpha^{j_3}-\alpha^{j_4}}\in S_{j_3,j_4},
\end{equation*}
then we obtain
\begin{equation*}
  x_2^{p^{3l}-p^{2l}+p^l-1}=\left(\frac{b-\alpha^{j_3}}{\alpha^{j_3}-\alpha^{j_4}}\right)^{p^{3l}-p^{2l}+p^l-1}
  =\left(\frac{b^{p^l}\alpha^{j_3}-1}{b-\alpha^{j_3}}\right)^2\alpha^{2j_4}=\alpha^{j_4},
\end{equation*}
\begin{equation*}
  (x_2+1)^{p^{3l}-p^{2l}+p^l-1}=\left(\frac{b-\alpha^{j_4}}{\alpha^{j_3}-\alpha^{j_4}}\right)^{p^{3l}-p^{2l}+p^l-1}
  =\left(\frac{b^{p^l}\alpha^{j_4}-1}{b-\alpha^{j_4}}\right)^2\alpha^{2j_3}=\alpha^{j_3},
\end{equation*}
which yields
\begin{equation}\label{fuza61}
  \alpha^{j_3}+\alpha^{j_4}=b^{2p^l}\alpha^{j_3}\alpha^{j_4}+2b-1.
\end{equation}
Taking $p^{l+1}$-th power on both sides of (\ref{fuza61}), it implies
\begin{equation}\label{fuza71}
\alpha^{j_3}\alpha^{j_4}=(b^{p^l}-1)^{2(p^l-1)}.
\end{equation}
Then by (\ref{fuza61}) and (\ref{fuza71}) one can obtain
\begin{equation*}
    \left \{
\begin{array}{lll}
\alpha^{j_3}\alpha^{j_4}=(b^{p^l}-1)^{2(p^l-1)},\\
\alpha^{j_3}+\alpha^{j_4}=b^{2p^l}(b^{p^l}-1)^{2(p^l-1)}+2b-1,\\
\end{array}\right.
\end{equation*}
i.e., $\alpha^{j_3}, \alpha^{j_4}$ are two solutions of
\begin{equation*}
  x^2-(b^{2p^l}(b^{p^l}-1)^{2(p^l-1)}+2b-1)
  x+(b^{p^l}-1)^{2(p^l-1)}=0.
\end{equation*}
Hence, when $\gamma_1\neq\gamma_2$, $b\not\in \mu_{p^l+1}$ and $b^{p^{2l}}= b$, the equation (\ref{x+1-x=b}) has either no solution or two solutions. This completes the proof.
\end{proof}

We determine the differential spectrum of $f$ in the following result.

\begin{theorem}
Let $f(x)=x^d$ be a power mapping defined over $\mathbb{F}_{p^n}$, where $d=p^{3l}-p^{2l}+p^l$, $n=4l$. Then the differential spectrum of $f$ is given by
\begin{equation*}
  \mathbb{DS}=\left\{\omega_0=\frac{p^n+p^{3l}-p^{2l}-p^l-2}{2}, \omega_2=\frac{p^n-p^{3l}+p^{2l}-p^l}{2}, \omega_{p^l}=1,
  \omega_{p^{2l}-p^l}=p^l\right \}.
\end{equation*}
\end{theorem}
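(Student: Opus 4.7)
The proof is essentially a direct bookkeeping exercise built on top of Lemma \ref{first lemma}, so my plan is to extract the values of $\omega_{p^l}$ and $\omega_{p^{2l}-p^l}$ immediately from that lemma and then determine $\omega_0,\omega_2$ from the two basic identities \eqref{two basic ide}.

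First I would read off the obvious parts of the spectrum. Part (i) of Lemma \ref{first lemma} tells us that $b=1$ is the unique element of $\mathbb{F}_{p^n}$ with $\delta(1,b)=p^l$, so $\omega_{p^l}=1$. Part (ii) tells us that every $b\in\mu_{p^l+1}\setminus\{1\}$ satisfies $\delta(1,b)=p^{2l}-p^l$; since $|\mu_{p^l+1}\setminus\{1\}|=p^l$, this gives $\omega_{p^{2l}-p^l}=p^l$. Part (iii) forces every remaining $b\in\mathbb{F}_{p^n}\setminus\mu_{p^l+1}$ to contribute only to $\omega_0$ or $\omega_2$, and in particular confirms that no other index $\omega_i$ can be nonzero. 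Hence the spectrum is supported on $\{0,2,p^l,p^{2l}-p^l\}$, with the last two multiplicities already known.

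Next I would use the identities
\begin{equation*}
\omega_0+\omega_2+\omega_{p^l}+\omega_{p^{2l}-p^l}=p^n,\qquad 2\omega_2+p^l\cdot\omega_{p^l}+(p^{2l}-p^l)\cdot\omega_{p^{2l}-p^l}=p^n.
\end{equation*}
Substituting $\omega_{p^l}=1$ and $\omega_{p^{2l}-p^l}=p^l$ into the second identity and solving yields
\begin{equation*}
\omega_2=\frac{p^n-p^l-p^l(p^{2l}-p^l)}{2}=\frac{p^n-p^{3l}+p^{2l}-p^l}{2},
\end{equation*}
and substituting back into the first identity gives $\omega_0=\frac{p^n+p^{3l}-p^{2l}-p^l-2}{2}$, matching the claim.

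There is essentially no obstacle in this theorem itself; all the real work (handling each of the three cases for $b$) has been done inside Lemma \ref{first lemma}. The only subtlety worth double-checking is that $\omega_2$ and $\omega_0$ come out to be nonnegative integers: parity is fine because $p^n-p^{3l}+p^{2l}-p^l=p^l(p^{3l}-p^{2l}+p^l-1)$ is manifestly even (either $p=2$ so everything is even, or $p$ is odd so $p^{3l}-p^{2l}+p^l-1$ is even), and nonnegativity follows from $p^n=p^{4l}\geq p^{3l}$. With both $\omega_0$ and $\omega_2$ determined, the four multiplicities in the statement exhaust $p^n$ and the proof is complete.
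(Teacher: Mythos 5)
Your proposal is correct and follows essentially the same route as the paper: read off $\omega_{p^l}=1$ and $\omega_{p^{2l}-p^l}=p^l$ from Lemma \ref{first lemma}, note that part (iii) confines all remaining $b$ to $\omega_0$ or $\omega_2$, and solve the two basic identities \eqref{two basic ide} for $\omega_0$ and $\omega_2$. Your added check that the resulting values are nonnegative integers is a small bonus the paper omits, but the argument is otherwise identical.
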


\begin{proof}
From (\ref{two basic ide}) and Lemma \ref{first lemma}, we obtain
\begin{equation}\label{1 eq1}
  \omega_0+\omega_2+\omega_{e_1}+
  \omega_{e_2}=p^n
\end{equation}
\begin{equation}\label{1 eq2}
  2\cdot\omega_2+p^l\cdot\omega_{p^l}+
  (p^{2l}-p^l)\cdot\omega_{p^{2l}-p^l}=p^n
\end{equation}
\begin{equation}\label{1 eq3}
  \omega_{p^l}=1
\end{equation}
\begin{equation}\label{1 eq4}
  \omega_{p^{2l}-p^l}=p^l.
\end{equation}
Solving the system equations consisting of (\ref{1 eq1})-(\ref{1 eq4}) we obtain the result.
\end{proof}

\subsection{$c$-differential uniformity of $x^{p^{2l}-p^{l}+1}$ over $\mathbb{F}_{p^n}$}
We give the following result on the $c$-differential uniformity of $f(x)=x^{p^{3l}-p^{2l}+p^l}$ for $c\in\mathbb{F}_{p^n}\setminus \mu_{p^l+1}$, where $n=4l$. For $c\in\mathbb{F}_{p^n}\setminus \mu_{p^l+1}$, the $c$-differential uniformity of $x^{p^{2l}-p^{l}+1}$ and $x^{p^{3l}-p^{2l}+p^l}$ are the same.

\begin{theorem}\label{xiaoyu3}
Let $d=p^{3l}-p^{2l}+p^l$. For $c\in\mathbb{F}_{p^n}\setminus \mu_{p^l+1}$, we have $_c\Delta_f\leq(p^l+1)^2$.
\end{theorem}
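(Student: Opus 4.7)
The plan is to invoke Lemma \ref{c delta f}, which reduces the problem to bounding both $\gcd(d,p^n-1)$ and $_c\Delta_f(1,b)$ by $(p^l+1)^2$. The engine of everything is the arithmetic identity
\begin{equation*}
(d-1)(p^l+1) = (p^{3l}-p^{2l}+p^l-1)(p^l+1) = p^{4l}-1 = p^n-1.
\end{equation*}
Since $\gcd(d,d-1)=1$, this identity forces $\gcd(d,p^n-1)$ to divide $p^l+1$, so the gcd term is disposed of at once.

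For the differential count I would observe that the same identity yields $(y^d/y)^{p^l+1}=y^{(d-1)(p^l+1)}=y^{p^n-1}=1$ for every $y\in\mathbb{F}_{p^n}^{*}$, so $y^d=y\cdot\sigma$ for some $\sigma\in\mu_{p^l+1}$ depending on $y$. Applied to both $x$ and $x+1$ (with $x\in\mathbb{F}_{p^n}^{\sharp}$), this turns the equation $(x+1)^d-cx^d=b$ into the linear relation
\begin{equation*}
(\eta-c\zeta)\,x = b-\eta, \qquad \eta,\zeta\in\mu_{p^l+1},
\end{equation*}
where $x^d=x\zeta$ and $(x+1)^d=(x+1)\eta$. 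The hypothesis $c\notin\mu_{p^l+1}$ is used decisively here: $\eta/\zeta\in\mu_{p^l+1}$ while $c\notin\mu_{p^l+1}$, so $\eta\ne c\zeta$, and the linear relation pins down $x=(b-\eta)/(\eta-c\zeta)$. Since each solution $x\in\mathbb{F}_{p^n}^{\sharp}$ produces its own uniquely determined pair $(\eta,\zeta)\in\mu_{p^l+1}^{2}$, this gives an injection from such solutions into $\mu_{p^l+1}^{2}$ and hence at most $(p^l+1)^2$ of them.

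The only delicate step is the bookkeeping at $x=0$ and $x=-1$, where $\zeta$ or $\eta$ is not literally defined. A direct substitution shows that $x=0$ solves the equation only when $b=1$ and $x=-1$ only when $b=c(-1)^d$, and the standing assumption $c\notin\mu_{p^l+1}$ (in particular $c\ne1$) rules out both happening simultaneously. Each such exceptional solution can be identified with a pair in $\mu_{p^l+1}^{2}$ (for instance $(\eta,\zeta)=(1,1)$) that the formula would not otherwise determine uniquely, so the total count remains at most $(p^l+1)^2$. The principal obstacle is really just this edge-case bookkeeping; the heart of the argument is the collapse to a linear equation that is made possible by $(d-1)(p^l+1)=p^n-1$ together with $c\notin\mu_{p^l+1}$.
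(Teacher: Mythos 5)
Your proposal is correct and follows essentially the same route as the paper: the identity $(d-1)(p^l+1)=p^n-1$ collapses the equation to a linear one for each pair in $\mu_{p^l+1}^2$ (the paper's Table 4, indexed by the sets $S_{i,j}$, is exactly your injection), with the same bookkeeping at $x=0,-1$ and the same disposal of $\gcd(d,p^n-1)$ via Lemma \ref{c delta f}. The only blemish is a sign slip at $x=-1$, where the value is $-c(-1)^d=c$ rather than $c(-1)^d$; this does not affect the argument since $c\notin\mu_{p^l+1}$ still forces $1\neq c$.
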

\begin{proof}
For a fixed $c\in\mathbb{F}_{p^n}\setminus \mu_{p^l+1}$, we mainly study the number of solutions in $\mathbb{F}_{p^n}$ of
\begin{equation}\label{x+1-cx=b}
   \triangle(x)=(x+1)^{d}-cx^{d}=b
\end{equation}
for any $b\in \mathbb{F}_{p^n}$. Denote by $\delta_{c}(b)$ the number of solutions of (\ref{x+1-cx=b}) in $\mathbb{F}_{p^n}$. Let $\alpha=\psi^{(p^n-1)/(p^l+1)}$. If $x\in\mathbb{F}_{p^n}^{\sharp}$, we can distinguish $(p^l+1)^2$ disjoint cases:
\begin{table}[H]
\centering
 \caption{Equation $(\ref{x+1-cx=b})$ with $x\in\mathbb{F}_{p^n}^{\sharp}$\\}
\label{table4}
\begin{tabular}{|l|c|c|c|c|c|}
\hline
&Set & Equation $(\ref{x+1-cx=b})$ & $x+1$& $x$\\
[0.5ex]
\hline
Case 1& $S_{0,0}$ & $x+1-cx=b$ & $\frac{b-c}{1-c}$ & $\frac{b-1}{1-c}$ \\
\hline
Case 2& $S_{0,1}$ & $x+1-\alpha cx=b$ & $\frac{b-\alpha c}{1-\alpha c}$ & $\frac{b-1}{1-\alpha c}$\\
\hline
\vdots & \vdots  & \vdots  & \vdots  & \vdots  \\
\hline
Case $p^l+1$ & $S_{0,p^l}$ & $x+1-\alpha^{p^l} cx=b$
 & $\frac{b-\alpha^{p^l}c}{1-\alpha^{p^l}c}$
& $\frac{b-1}{1-\alpha^{p^l}c}$\\
\hline
Case $p^l+2$ & $S_{1,0}$ & $\alpha(x+1)-cx=b$
 & $\frac{b-c}{\alpha-c}$
& $\frac{b-\alpha}{\alpha-c}$\\
\hline
\vdots & \vdots  & \vdots  & \vdots  & \vdots  \\
\hline
Case $2p^l+2$ & $S_{1,p^l}$ & $\alpha(x+1)-\alpha^{p^l}cx=b$
 & $\frac{b-\alpha^{p^l}c}{\alpha-\alpha^{p^l}c}$
& $\frac{b-\alpha}{\alpha-\alpha^{p^l}c}$\\
\hline
\vdots & \vdots  & \vdots  & \vdots  & \vdots  \\
\hline
Case $p^{2l}+p^l+1$ & $S_{p^l,0}$ & $\alpha^{p^l}(x+1)-cx=b$
 & $\frac{b-c}{\alpha^{p^l}-c}$
& $\frac{b-\alpha^{p^l}}{\alpha^{p^l}-c}$\\
\hline
\vdots & \vdots  & \vdots  & \vdots  & \vdots  \\
\hline
Case $(p^l+1)^2$ & $S_{p^l,p^l}$ & $\alpha^{p^l}(x+1)-\alpha^{p^l}cx=b$
 & $\frac{b-\alpha^{p^l}c}{\alpha^{p^l}-\alpha^{p^l}c}$ & $\frac{b-\alpha^{p^l}}{\alpha^{p^l}-\alpha^{p^l}c}$ \\
\hline
\end{tabular}
\end{table}
From Table \ref{table4}, we have $\delta_c(b)\leq(p^l+1)^2$ for $b\neq1,c$ since $\triangle(0)=1$ and $\triangle(-1)=c$. Note that $\triangle(x)=1$ has no solution in $\bigsqcup\limits_{0\leq i\leq p^l} S_{0,i}$, and $\triangle(x)=c$ has no solution in $\bigsqcup\limits_{0\leq i\leq p^l} S_{i,0}$ and so, $\delta_{c}(1),\delta_{c}(c)\leq(p^l+1)^2$. Then, according to $\gcd(d,p^n-1)\leq3$ and lemma \ref{c delta f}, we get $_c\Delta_f\leq(p^l+1)^2$.
\end{proof}

\section{On Exponential sum}\label{ex code}

In [\ref{H1}, \ref{H2}], Helleseth determined the value distribution of the cross correlation function for $d=p^{3l}-p^{2l}+p^l$ with $n=4l$, $p^{l}\not\equiv 2\,\,({\rm mod}\,\,3)$. So what happens if $p^{l}\equiv 2\,\,({\rm mod}\,\,3)$? In this section, we give some relevant conclusions. We fix $n=4l$, where $l$ is a positive integer. Let
$d_1=p^{3l}-p^{2l}+p^l$, where $p^{l}\equiv 2\,\,({\rm mod}\,\,3)$.
Define
\begin{equation}\label{suv}
S_{d_1}(u,v)=\sum\limits_{x\in\mathbb{F}_{p^n}}\zeta_p^{{\rm{Tr}}_1^n(ux^{d_1}-vx)},
\end{equation}
where $(u,v)\in \mathbb{F}_{p^n}\times\mathbb{F}_{p^n}$, $\zeta_p=e^{\frac{2\pi\sqrt{-1}}{p}}$ is a primitive complex $p$-th root of unity, and $\Tr^n_1(x)=\sum\limits_{i=0}^{n-1}x^{p^i}$ denotes the absolute trace mapping from $\mathbb{F}_{p^n}$ to $\mathbb{F}_{p}$.  Define
\begin{equation*}
\begin{aligned}
&C_0=\left\{x\in \mathbb{F}_{p^n}^{*}\mid x=y^{p^l+1}\mbox{ for some }y \in\mathbb{F}_{p^n}\right\},\\
&C_{\infty}=\left\{0\right\},\\
&C_1=\mathbb{F}_{p^n}\setminus(C_0\cup C_{\infty}).
\end{aligned}
\end{equation*}

In order to determine the values of $S_{d_1}(u,v)$. We need one result without proof for which the readers can refer to [\ref{1972 ex}].
\begin{lemma}\label{jielun}
Suppose $n=4l$ and $\zeta_p$ be a complex $p$-th root of unity. Then
\begin{equation*}
    \sum\limits_{y\in \mathbb{F}_{p^n}}\zeta_p^{\Tr ^n_1(ay^{p^{l}+1})}=
	\left\{ \begin{array}{llll}
		p^{n}&\mbox{if}&a\in C_{\infty},\\
		-p^{3l}&\mbox{if}&a\in C_0,\\
		p^{2l}&\mbox{if}&a\in C_1.
	\end{array}\right.
\end{equation*}
\end{lemma}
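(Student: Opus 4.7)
The plan is to evaluate $S_a := \sum_{y \in \mathbb{F}_{p^n}} \zeta_p^{\Tr^n_1(ay^{p^l+1})}$ via a character decomposition followed by the Davenport--Hasse descent to $\mathbb{F}_{p^{2l}}$. The case $a \in C_\infty$ (i.e.\ $a = 0$) is immediate. For $a \neq 0$ pick a multiplicative character $\chi$ of $\mathbb{F}_{p^n}^*$ of exact order $p^l + 1$. Because $\gcd(p^l + 1,\, p^n - 1) = p^l + 1$, the number of preimages of $t \in \mathbb{F}_{p^n}^*$ under $y \mapsto y^{p^l + 1}$ equals $\sum_{j=0}^{p^l}\chi^j(t)$. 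Grouping $S_a$ by $t = y^{p^l+1}$, observing that the $y = 0$ contribution cancels the $j = 0$ term $\sum_{t\neq 0}\zeta_p^{\Tr^n_1(t)} = -1$, and using the change of variables $t \mapsto t/a$ yields
\begin{equation*}
 S_a = \sum_{j=1}^{p^l} \chi^{-j}(a)\, G(\chi^j), \qquad G(\chi^j) := \sum_{t\in\mathbb{F}_{p^n}^*} \chi^j(t)\, \zeta_p^{\Tr^n_1(t)}.
\end{equation*}

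Next I would descend to $\mathbb{F}_{p^{2l}}$. Since $p^l + 1 \mid p^{2l} - 1$, every character of $\mathbb{F}_{p^n}^*$ of order dividing $p^l + 1$ factors through the norm $N(y) = y^{p^{2l} + 1}$, so $\chi = \chi_0 \circ N$ for a unique character $\chi_0$ of $\mathbb{F}_{p^{2l}}^*$ of order $p^l + 1$. A counting argument (the characters of order dividing $p^l + 1$ and the characters trivial on $\mathbb{F}_{p^l}^*$ both form subgroups of order $p^l + 1$ in the dual of $\mathbb{F}_{p^{2l}}^*$, and one is contained in the other) shows $\chi_0$ is automatically trivial on $\mathbb{F}_{p^l}^*$. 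The Davenport--Hasse relation for the quadratic extension $\mathbb{F}_{p^n}/\mathbb{F}_{p^{2l}}$ then gives $G(\chi^j) = -G(\chi_0^j)^2$.

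The technical core is the evaluation of $G(\chi_0^j)$ for $1 \le j \le p^l$. Partition $\mathbb{F}_{p^{2l}}^* = \bigsqcup_\beta \beta\,\mathbb{F}_{p^l}^*$ into its $p^l+1$ cosets and factor
\begin{equation*}
 G(\chi_0^j) = \sum_\beta \chi_0^j(\beta) \sum_{t\in\mathbb{F}_{p^l}^*} \zeta_p^{\Tr_{p^{2l}/p}(\beta t)}.
\end{equation*}
The inner sum equals $p^l - 1$ or $-1$ according as $\Tr_{p^{2l}/p^l}(\beta) = 0$ or not, and exactly one coset has vanishing relative trace (the multiples of some $\beta_0 \in \ker(\Tr_{p^{2l}/p^l})^*$). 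Combined with $\sum_\beta \chi_0^j(\beta) = 0$, this collapses to $G(\chi_0^j) = p^l\,\chi_0^j(\beta_0)$. The identity $\beta_0^{p^l} = -\beta_0$ forces $\beta_0^{p^l+1} = -\beta_0^2$, so $\chi_0(\beta_0)^2 = \chi_0(-1)$ and hence $G(\chi_0^j)^2 = p^{2l}\,\chi_0(-1)^j$, independent of the choice of $\beta_0$.

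Substituting back, $S_a = -p^{2l}\sum_{j=1}^{p^l}\xi^j$ with $\xi := \chi_0(-1/N(a)) \in \mu_{p^l+1}$; the geometric series equals $p^l$ if $\xi = 1$ and $-1$ otherwise, so $S_a \in \{-p^{3l},\, p^{2l}\}$. Since $\ker\chi_0 = \mathbb{F}_{p^l}^*$, $\xi = 1$ is equivalent to $N(a) \in \mathbb{F}_{p^l}^*$, i.e.\ $a^{(p^n-1)/(p^l+1)} = 1$, i.e.\ $a \in C_0$; this assigns $-p^{3l}$ to $C_0$ and $p^{2l}$ to $C_1$, as claimed. I expect the main obstacle to be the Gauss-sum evaluation in the third paragraph: one must correctly identify the unique coset of $\mathbb{F}_{p^l}^*$ annihilated by the relative trace and carefully track the $\chi_0(-1)$ phase through the squaring, since any sign slip there would flip the $C_0$/$C_1$ dichotomy in the final answer.
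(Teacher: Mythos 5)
Your proof is correct, but note that the paper does not prove this lemma at all: it is stated ``without proof'' with a pointer to Baumert and McEliece [\ref{1972 ex}], where it follows from the semiprimitive evaluation of Gauss sums (here $p^{l}\equiv -1 \pmod{p^{l}+1}$, so every character of order dividing $p^{l}+1$ is in the semiprimitive case over $\mathbb{F}_{p^{4l}}$). Your argument --- expand the fibre count of $y\mapsto y^{p^{l}+1}$ by characters to get $S_a=\sum_{j=1}^{p^{l}}\chi^{-j}(a)G(\chi^{j})$, descend through the norm with Davenport--Hasse, and evaluate $G(\chi_0^{j})=p^{l}\chi_0^{j}(\beta_0)$ by exploiting the triviality of $\chi_0$ on $\mathbb{F}_{p^{l}}^{*}$ --- is a complete, self-contained derivation of that evaluation; every step checks out, including the counting argument that characters of order dividing $p^{l}+1$ coincide with characters trivial on $\mathbb{F}_{p^{l}}^{*}$, the identification $\ker\chi_0=\mathbb{F}_{p^{l}}^{*}$, and the equivalence $N(a)\in\mathbb{F}_{p^{l}}^{*}\Leftrightarrow a^{(p^{n}-1)/(p^{l}+1)}=1\Leftrightarrow a\in C_0$. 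The one place you flag as delicate is actually automatic: since $-1\in\mathbb{F}_{p^{l}}^{*}$ and $\chi_0$ is trivial there, $\chi_0(-1)=1$, hence $G(\chi_0^{j})^{2}=p^{2l}$ outright, $G(\chi^{j})=-p^{2l}$ for all $1\le j\le p^{l}$, and $\xi=\chi_0(N(a))^{-1}$; no phase tracking is needed. A quick check at $p=2$, $l=1$ (where $S_1=1+3(1-4)=-8=-p^{3l}$) confirms the sign convention of the lemma. What your route buys is independence from the cited reference and an explicit proof the paper lacks; what the citation buys is brevity, since this is a textbook instance of uniform cyclotomy.
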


The following result is essential to determine the value of $S_{d_1}(u,v)$.
\begin{lemma}\label{1 alpha}
Let $d_1=p^{3l}-p^{2l}+p^l$,
where $p^{l}\equiv 2\,\,({\rm mod}\,\,3)$. For each fixed $u \in \mathbb{F}_{p^n}^*$, we have
\begin{equation*}
    \{S_{d_1}(u,v): v\in\mathbb{F}_{p^{n}}^* \}=\left\{ \begin{array}{llll}
	\{S_{d_1}(1,v): v\in\mathbb{F}_{p^{n}}^* \},&\mbox{if}&u \mbox{ is a cube in }\mathbb{F}_{p^{n}}^*,\\
	\{S_{d_1}(\psi,v): v\in\mathbb{F}_{p^{n}}^* \},&\mbox{if}&u \mbox{ is a non-cube in }\mathbb{F}_{p^{n}}^*.
	\end{array}\right.
\end{equation*}
\end{lemma}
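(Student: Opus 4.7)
The plan is to exploit two symmetries of $S_{d_1}(\cdot,\cdot)$ that together collapse the dependence of the set $\{S_{d_1}(u,v):v\in\mathbb{F}_{p^n}^*\}$ on $u$ down to just two representatives: one cube and one non-cube.

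First, for any $t\in\mathbb{F}_{p^n}^*$, the substitution $x\mapsto tx$ in the defining sum yields $S_{d_1}(u,v)=S_{d_1}(ut^{d_1},vt)$, so
\begin{equation*}
\{S_{d_1}(u,v):v\in\mathbb{F}_{p^n}^*\}=\{S_{d_1}(ut^{d_1},v):v\in\mathbb{F}_{p^n}^*\}
\end{equation*}
since $v\mapsto vt$ permutes $\mathbb{F}_{p^n}^*$. The set on the left therefore depends only on the coset of $u$ in $\mathbb{F}_{p^n}^*/(\mathbb{F}_{p^n}^*)^e$ where $e=\gcd(d_1,p^n-1)$. I compute $e=3$ under our hypothesis: since $\gcd(p^l,p^n-1)=1$, it suffices to compute $\gcd(p^{2l}-p^l+1,p^{4l}-1)$. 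Working modulo $p^{2l}-p^l+1$ one has $p^{2l}\equiv p^l-1$ and hence $p^{4l}\equiv(p^l-1)^2\equiv -p^l$, so $p^{4l}-1\equiv -(p^l+1)$; combining this with the identity $p^{2l}-p^l+1=(p^l+1)(p^l-2)+3$ gives
\begin{equation*}
\gcd(p^{2l}-p^l+1,p^{4l}-1)=\gcd(p^{2l}-p^l+1,p^l+1)=\gcd(3,p^l+1)=3,
\end{equation*}
the last equality from $p^l\equiv 2\pmod 3$. Thus the set in question depends only on the coset of $u$ modulo cubes, of which there are three: $(\mathbb{F}_{p^n}^*)^3$, $\psi(\mathbb{F}_{p^n}^*)^3$, and $\psi^2(\mathbb{F}_{p^n}^*)^3$.

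The final step merges the two non-cube cosets via Frobenius. Using $\Tr_1^n(z)=\Tr_1^n(z^p)$ and the bijection $y=x^p$,
\begin{equation*}
S_{d_1}(u,v)=\sum_{x\in\mathbb{F}_{p^n}}\zeta_p^{\Tr_1^n(u^px^{pd_1}-v^px^p)}=S_{d_1}(u^p,v^p),
\end{equation*}
so $\{S_{d_1}(u,v):v\in\mathbb{F}_{p^n}^*\}=\{S_{d_1}(u^p,v):v\in\mathbb{F}_{p^n}^*\}$ via the bijection $v\mapsto v^p$. Now $p^l\equiv 2\pmod 3$ forces $p\neq 3$ and $p\not\equiv 1\pmod 3$, so $p\equiv 2\equiv -1\pmod 3$; raising to the $p$-th power therefore inverts classes in $\mathbb{F}_{p^n}^*/(\mathbb{F}_{p^n}^*)^3$ and swaps $\psi(\mathbb{F}_{p^n}^*)^3$ with $\psi^2(\mathbb{F}_{p^n}^*)^3$. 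Combined with the scaling invariance, this identifies the sets attached to all non-cube $u$ with $\{S_{d_1}(\psi,v):v\in\mathbb{F}_{p^n}^*\}$, and the claim follows.

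I do not foresee a serious obstacle. The two delicate points are (i) the gcd computation, which is cleanest via the one-step reduction $p^{4l}-1\equiv -(p^l+1)\pmod{p^{2l}-p^l+1}$ rather than a naive Euclidean run, and (ii) extracting $p\equiv -1\pmod 3$ from the hypothesis so that Frobenius actually swaps the two non-cube cube-cosets; both are short number-theoretic checks that hinge on $p^l\equiv 2\pmod 3$.
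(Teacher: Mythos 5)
Your proof is correct and follows essentially the same route as the paper: first use the scaling $x\mapsto tx$ (the paper's substitution $x\mapsto\psi^{-k_1u_1}x$ with $u_1d_1+v_1(p^n-1)=3$) to reduce $u$ to its coset modulo cubes, then use the Frobenius together with $p\equiv -1\pmod 3$ to merge the $\psi^2$-coset into the $\psi$-coset. The only addition is that you actually verify $\gcd(d_1,p^n-1)=3$, which the paper asserts without proof; that computation is correct.
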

\begin{proof}
  Since $\gcd(d_1,p^n-1)=3$, there exist two integers $u_1$ and $v_1$ such that $u_1d_1+v_1(p^n-1)=3$. Then for any $x\in \mathbb{F}_{p^n}^*$, we have $x^3=x^{u_1d_1}$. Moreover, note that every $u \in \mathbb{F}_{p^n}^*$ can be written uniquely as $u={\psi}^{3k_1+j}$ for some $0\leq k_1< (p^n-1)/3$ and $0\leq j\leq2$. Replacing $x$ by $\psi^{-k_1u_1}x$ in $S_{d_1}(u,v)$ gives
\begin{equation*}
    S_{d_1}(u,v)=\left\{ \begin{array}{llll}
	S_{d_1}(1,v\psi^{-k_1u_1}),&\mbox{if}&j=0 ,\\
	S_{d_1}(\psi,v\psi^{-k_1u_1}),&\mbox{if}&j=1 .
	\end{array}\right.
\end{equation*}
If $j=2$, denote $p=3k_2+2$, $y=\psi^{(k_1-k_2)u_1}x$ and $z=y^{p^{n-1}}$, then
\begin{equation*}
  \begin{aligned}
  S_{d_1}(u,v)&=\sum\limits_{x\in\mathbb{F}_{p^n}}\zeta_p^{{\rm{Tr}}_1^n(ux^{d_1}-vx)}\\
  &=\sum\limits_{y\in \mathbb{F}_{p^n}}\zeta_p^{\Tr_1^{n}(\psi^py^{d_1}-v\psi^{(k_2-k_1)u_1}y)}\\
  &=\sum\limits_{z\in \mathbb{F}_{p^n}}\zeta_p^{\Tr_1^{n}(\psi z^{d_1}-v^{p^{n-1}}\psi^{p^{n-1}(k_2-k_1)u_1}z)}\\
  &=S_{d_1}(\psi,v^{p^{n-1}}\psi^{p^{n-1}(k_2-k_1)u_1}).\\
\end{aligned}
\end{equation*}
Because $v\psi^{-ku_1}$ and $v^{p^{n-1}}\psi^{p^{n-1}(k_2-k_1)u_1}$ all run through $\mathbb{F}_{p^n}^*$ as $v$ runs through $\mathbb{F}_{p^n}^*$, we can obtain the result.
\end{proof}

We define
\begin{equation*}
    n_k(u,v)=\# \left\{j\mid u\psi^{d_1j}-v\psi^j\in C_k, 0\leq j\leq p^{l} \right\}
\end{equation*}
for $k=0,1,\infty$ and $(u,v)\in \mathbb{F}_{p^n}^*\times\mathbb{F}_{p^n}$.
\begin{lemma}\label{n_i(u,v)}
Let $n=4l$, where $l$ is a positive integer. Let $d_1=p^{3l}-p^{2l}+p^{l}$, where $p^{l}\equiv 2\,\,({\rm mod}\,\,3)$. Let $(u,v)\in \mathbb{F}_{p^n}^*\times\mathbb{F}_{p^n}$, then the following holds
\begin{itemize}
	\item[(i)] $n_{\infty}(u,v)=1$ if $({\frac{v}{u}})^{p^l+1}=1$ and $n_{\infty}(u,v)=0$ otherwise.
	\item[(ii)] $n_0(u,v)+n_{\infty}(u,v)\leq3$.
 \item[(iii)] If $u=1$ and $n_{\infty}(u,v)=1$, then $n_0(u,v)=0$ for $v=1,\psi^{(d_1-1)\frac{p^l+1}{3}},\psi^{2(d_1-1)\frac{p^l+1}{3}}$ and $n_0(u,v)=1$ otherwise.
 \item[(iv)] If $u=\psi$ and $n_{\infty}(u,v)=1$, then $n_0(u,v)=1$.
\end{itemize}
\end{lemma}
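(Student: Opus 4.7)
The plan is to parametrize $j\in\{0,\dots,p^l\}$ by $t := \alpha^j$, where $\alpha := \psi^{d_1-1}$. Using the factorization $d_1-1 = (p^l-1)(p^{2l}+1)$, one checks that $\gcd(d_1-1,\,p^{4l}-1) = (p^l-1)(p^{2l}+1)$, so $\alpha$ has exact order $p^l+1$ and $j\mapsto t$ is a bijection $\{0,\dots,p^l\}\to\mu_{p^l+1}$. Writing $w_j := u\psi^{d_1 j}-v\psi^j = \psi^j(ut-v)$, part (i) is immediate: $w_j=0 \iff t=v/u$, which has a (unique) solution in $\mu_{p^l+1}$ precisely when $(v/u)^{p^l+1}=1$. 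For (ii)--(iv), the key observation is that $w_j\in C_0\cup\{0\}$ is equivalent to $w_j^{d_1}=w_j$, which via $d_1=p^l-p^{2l}+p^{3l}$ can be written as $\alpha^j\, r^{p^l+p^{3l}} = r^{1+p^{2l}}$ with $r=ut-v$. Applying the Frobenius identities $t^{p^l}=t^{-1}$ and $t^{p^{2l}}=t$ (valid because $\mu_{p^l+1}\subset\mathbb{F}_{p^{2l}}$) and clearing denominators produces a polynomial $P(t)\in\mathbb{F}_{p^n}[t]$ of degree exactly $3$ with leading coefficient $u^{p^{2l}+1}\neq 0$. Part (ii) is immediate: $n_0+n_\infty$ counts distinct $t\in\mu_{p^l+1}$ satisfying $P(t)=0$, and a nonzero cubic has at most three roots.

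For (iii), specialize to $u=1$ and $v\in\mu_{p^l+1}$ (so $v^{p^l}=v^{-1}$, $v^{p^{2l}}=v$). A direct simplification collapses the cubic to
\begin{equation*}
P(t) \;\propto\; (t-v)^2\bigl(1 - v^2 t\bigr),
\end{equation*}
whose roots in $\mu_{p^l+1}$ are $t=v$ and $t=v^{-2}$. The root $t=v$ corresponds to the unique $j$ giving $w_j=0$, accounting for $n_\infty=1$; the root $t=v^{-2}$ yields $w_j\neq 0$ with $w_j\in C_0$, contributing $1$ to $n_0$ unless it coincides with $t=v$, i.e., unless $v^3=1$. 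The hypothesis $p^l\equiv 2\pmod 3$ forces $3\mid p^l+1$, so the three cube roots of unity in $\mu_{p^l+1}$ are exactly $1$, $\alpha^{(p^l+1)/3}=\psi^{(d_1-1)(p^l+1)/3}$ and its square $\psi^{2(d_1-1)(p^l+1)/3}$, producing the stated dichotomy.

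For (iv), write $u=\psi$ and $v=\psi s$ with $s\in\mu_{p^l+1}$ (exactly the content of $n_\infty(\psi,v)=1$). The parallel computation factors the cubic as
\begin{equation*}
(t-s)^2\bigl(\psi^{p^l+p^{3l}} - \psi^{p^{2l}+1}\, s^2\, t\bigr),
\end{equation*}
with roots $t=s$ and $t=\alpha s^{-2}$, both lying in $\mu_{p^l+1}$. The first root accounts for $n_\infty=1$; the second yields $w_j\in C_0\setminus\{0\}$ provided the two roots are distinct. The main obstacle, and the real content of this case, is ruling out $\alpha = s^3$ for all $s\in\mu_{p^l+1}$: since $\alpha$ is a generator of the cyclic group $\mu_{p^l+1}$, and the subgroup of cubes in $\mu_{p^l+1}$ has index $\gcd(3,p^l+1)=3$ (again using $p^l\equiv 2\pmod 3$), $\alpha$ lies outside this cube subgroup. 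Hence $s^3=\alpha$ has no solution, the two roots are always distinct, and $n_0(\psi,v)=1$.
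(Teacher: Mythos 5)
Your proposal is correct and follows essentially the same route as the paper: reduce membership in $C_0\cup\{0\}$ to $w^{d_1}=w$, turn this into the cubic $u^{p^{2l}+1}t^3-(v^{p^{3l}+p^l}+v^{p^{2l}}u+vu^{p^{2l}})t^2+(v^{p^{2l}+1}+v^{p^{3l}}u^{p^l}+v^{p^l}u^{p^{3l}})t-u^{p^{3l}+p^l}=0$ on $\mu_{p^l+1}$, and factor it in the two special cases; your factorizations agree with the paper's up to unit multiples, and your "a generator of $\mu_{p^l+1}$ is not a cube" argument is the same as the paper's congruence $3j\not\equiv 1\pmod{p^l+1}$. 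The only differences are cosmetic (parametrizing by $t=\alpha^j$ directly and treating $s=1$ together with $s\neq1$).
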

\begin{proof}
 (i) If $u\psi^{d_1j}-v\psi^j=0$, then $\psi^{(d_1-1)j}=\frac{v}{u}$. Note that
\begin{equation*}
    d_1-1=p^{3l}-p^{2l}+p^{l}-1=\frac{p^{n}-1}{p^{l}+1}.
\end{equation*}
It follows that $n_{\infty}(u,v)=1$ if $({\frac{v}{u}})^{p^l+1}=1$ and $n_{\infty}(u,v)=0$ otherwise.

(ii) Let $T(u,v)=n_0(u,v)+n_{\infty}(u,v)$. Since there are $n_{\infty}(u,v)$ solutions corresponding to $ux^{d_1}-vx=0$ and $n_0(u,v)$ solutions corresponding to $ux-vx^{d_1}\in C_0$. Then $T(u,v)$ is the number of solutions of \begin{equation*}
    (ux^{d_1}-vx)^{p^{3l}+p^{l}}-(ux^{d_1}-vx)^{p^{2l}+1}=0
\end{equation*}
where $x=\psi^j$, $0\leq j\leq p^l$. Straightforward calculation implies that $T(u,v)$ is the number of solutions of
\begin{equation*}
\begin{aligned}
    u^{p^{2l}+1}x^{3(d_1-1)}-(v^{p^{3l}+p^l}+v^{p^{2l}}u+vu^{p^{2l}})x^{2(d_1-1)}
    +(v^{p^{2l}+1}+v^{p^{3l}}u^{p^l}+v^{p^l}u^{p^{3l}})x^{d_1-1}-u^{p^{3l}+p^l}=0,
\end{aligned}
\end{equation*}
where $x=\psi^j$, $0\leq j\leq p^l$. Therefore, by letting $y=x^{d_1-1}$, we obtain that $T(u,v)$ is the number of solutions to
\begin{equation*}
     \left \{
     \begin{array}{lll}
u^{p^{2l}+1}y^3-(v^{p^{3l}+p^l}+v^{p^{2l}}u+vu^{p^{2l}})y^2
    +(v^{p^{2l}+1}+v^{p^{3l}}u^{p^l}+v^{p^l}u^{p^{3l}})y-u^{p^{3l}+p^l}=0.\\
    y^{p^l+1}=1.\\
    \end{array}\right.
\end{equation*}
It follows that $n_0(u,v)+n_{\infty}(u,v)\leq3$.

(iii) In the case $n_{\infty}(u,v)=1$, it follows from (1) that $({\frac{v}{u}})^{p^l+1}=1$. This implies that $u^{p^l+1}=v^{p^l+1}$.

\emph{Case 1:} $u=v=1$. Then $T(u,v)$ is the number  of common solutions of $y^{p^l+1}=1$ and
\begin{equation*}
    y^3-3y^2+3y-1={(y-1)}^3=0.
\end{equation*}
So, we obtain $T(u,v)=n_0(u,v)+n_{\infty}(u,v)=1$.

\emph{Case 2:} $v=\psi^{(d_1-1)j},\,u=\psi^{(d_1-1)j}$, where $d_1-1=\frac{p^{4l}-1}{p^l+1}$ for $1\leq j\leq p^l$. Then $T(u,v)$ is the number  of common solutions of $y^{p^l+1}=1$ and
\begin{equation*}
    \begin{aligned}
        &y^3-(\psi^{-2j(d_1-1)}+2\psi^{j(d_1-1)})y^2+(\psi^{2j(d_1-1)}+2\psi^{-j(d_1-1)})y-1\\
     &=(y-\psi^{-2j(d_1-1)}){(y-\psi^{j(d_1-1)})}^2\\
     &=0.
    \end{aligned}
\end{equation*}
Then $n_0(u,v)=0$ for $j=\frac{p^l+1}{3},\frac{2(p^l+1)}{3}$ and $n_0(u,v)=1$ otherwise.

(iv) In the case $n_{\infty}(u,v)=1$, it follows from (1) that $({\frac{v}{u}})^{p^l+1}=1$. This implies that $u^{p^l+1}=v^{p^l+1}$.

\emph{Case 1:} $v=u=\psi$. Then $T(a,b)$ is the number  of common solutions of $y^{p^l+1}=1$ and
\begin{equation*}
\begin{aligned}
    &\psi^{p^{2l}+1}y^3-(\psi^{p^{3l}+p^l}+2\psi^{p^{2l}+1})y^2
    +(\psi^{p^{2l}+1}+2\psi^{p^{3l}+p^l})y-\psi^{p^{3l}+p^l}\\
    &=(\psi^{p^{2l}+1}y-\psi^{p^{3l}+p^l}){(y-1)}^2\\
    &=0.
    \end{aligned}
\end{equation*}
So, we have $T(u,v)=n_0(u,v)+n_{\infty}(u,v)=2$.

\emph{Case 2:} $v=\psi^{(d_1-1)j}u=\psi^{(d_1-1)j}\psi$, where $d_1-1=\frac{p^{4l}-1}{p^l+1}$, $j=1,2,...,p^l$. Then $T(u,v)$ is the number  of common solutions of $y^{p^l+1}=1$ and
\begin{equation*}
 \begin{aligned}
    &\psi^{p^{2l}+1}y^3-(\psi^{-2j(d_1-1)}\psi^{p^{3l}+p^l}
    +2\psi^{j(d_1-1)}\psi^{p^{2l}+1})y^2\\
    &+(\psi^{2j(d_1-1)}\psi^{p^{2l}+1}
    +2\psi^{-j(d_1-1)}\psi^{p^{3l}+p^l})y-\psi^{p^{3l}+p^l}\\
    &=(\psi^{p^{2l}+1}y-\psi^{-2j(d_1-1)}\psi^{p^{3l}+p^l}){(y-\psi^{j(d_1-1)})}^2\\
    &=0.
     \end{aligned}
\end{equation*}
So, we get $T(u,v)=n_0(u,v)+n_{\infty}(u,v)=2$.
\end{proof}

Sum it up, $S_{d_1}(u,v)$ can be explicitly determined.
\begin{theorem}\label{wf2 value}
Let $p$ be a prime number and $p^l\equiv 2\,\,({\rm mod}\,\,3)$. Let $d_1=p^{3l}-p^{2l}+p^l$. When $(u,v)$ runs through $\mathbb{F}_{p^n}^*\times\mathbb{F}_{p^n}$, $S_{d_1}(u,v)$ takes values in
\begin{equation*}
\left\{-2p^{2l}, -p^{2l}, 0, p^{2l}, p^{3l}-p^{2l}, p^{3l}\right\}.
\end{equation*}
\end{theorem}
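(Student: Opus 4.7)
The plan is to reduce $S_{d_1}(u,v)$ to a sum that Lemma~\ref{jielun} can evaluate termwise, then read off the possible values using Lemmas~\ref{1 alpha} and~\ref{n_i(u,v)}. First I would verify the key congruence
\begin{equation*}
d_1(p^l+1) = p^l(p^{2l}-p^l+1)(p^l+1) = p^l(p^{3l}+1) \equiv p^l+1 \pmod{p^n-1},
\end{equation*}
which forces $(y^{p^l+1})^{d_1} = y^{p^l+1}$ for every $y \in \mathbb{F}_{p^n}^*$. Because the cosets of the subgroup $C_0$ of $(p^l+1)$-th powers partition $\mathbb{F}_{p^n}^*$, the map $(j,y) \mapsto \psi^j y^{p^l+1}$ from $\{0,1,\dots,p^l\}\times\mathbb{F}_{p^n}^*$ to $\mathbb{F}_{p^n}^*$ is $(p^l+1)$-to-one, and substituting $x=\psi^j y^{p^l+1}$ transforms the exponent $\Tr_1^n(ux^{d_1}-vx)$ into $\Tr_1^n\bigl((u\psi^{jd_1}-v\psi^j)y^{p^l+1}\bigr)$. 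Reinserting $y=0$ (whose correction term $-(p^l+1)/(p^l+1)=-1$ exactly cancels the $x=0$ contribution $1$) yields
\begin{equation*}
S_{d_1}(u,v) = \frac{1}{p^l+1}\sum_{j=0}^{p^l}\sum_{y\in\mathbb{F}_{p^n}}\zeta_p^{\Tr_1^n((u\psi^{jd_1}-v\psi^j)y^{p^l+1})}.
\end{equation*}

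Next I would classify each coefficient $u\psi^{jd_1}-v\psi^j$ as lying in $C_\infty$, $C_0$, or $C_1$, apply Lemma~\ref{jielun} to the corresponding inner sum, and collect totals using the counts $n_\infty(u,v), n_0(u,v), n_1(u,v)$ introduced just before Lemma~\ref{n_i(u,v)}. Exploiting $n_\infty+n_0+n_1=p^l+1$ together with the factorizations $p^n-p^{2l}=p^{2l}(p^l-1)(p^l+1)$ and $p^{3l}+p^{2l}=p^{2l}(p^l+1)$, the sum collapses to the compact formula
\begin{equation*}
S_{d_1}(u,v) = p^{2l}\bigl[(p^l-1)\,n_\infty(u,v) - n_0(u,v) + 1\bigr],
\end{equation*}
which turns the entire problem into an enumeration of admissible pairs $(n_\infty,n_0)$.

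Finally I would invoke Lemma~\ref{1 alpha} to reduce the parameter $u$ to $\{1,\psi\}$ when $v \ne 0$, while the case $v=0$ is handled directly from the formula: one has $n_\infty(u,0)=0$, and using $p^l\equiv 2\pmod 3$ one computes $d_1\equiv -3\pmod{p^l+1}$, giving $n_0(1,0)=3$ and $n_0(\psi,0)=n_0(\psi^2,0)=0$. Lemma~\ref{n_i(u,v)}(i)--(ii) force $n_\infty\in\{0,1\}$ and $n_\infty+n_0\le 3$, and parts~(iii)--(iv) further show that whenever $n_\infty=1$ one must have $n_0\in\{0,1\}$ (for both $u=1$ and $u=\psi$). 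The only surviving pairs are $(0,0),(0,1),(0,2),(0,3),(1,0),(1,1)$, producing respectively the six values $p^{2l}, 0, -p^{2l}, -2p^{2l}, p^{3l}, p^{3l}-p^{2l}$, which is exactly the claimed set. The main obstacle is the opening step: everything hinges on recognizing $d_1(p^l+1)\equiv p^l+1\pmod{p^n-1}$, without which the substitution $x=\psi^j y^{p^l+1}$ does not return $x^{d_1}$ to the form $y^{p^l+1}$ and Lemma~\ref{jielun} cannot be brought to bear.
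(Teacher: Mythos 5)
Your proposal is correct and follows essentially the same route as the paper: the same coset decomposition of $\mathbb{F}_{p^n}^*$ based on $d_1(p^l+1)\equiv p^l+1\pmod{p^n-1}$, Lemma~\ref{jielun} for the inner sums, and Lemmas~\ref{1 alpha} and~\ref{n_i(u,v)} to enumerate the admissible pairs $(n_\infty,n_0)$. Your closed formula $S_{d_1}(u,v)=p^{2l}\bigl[(p^l-1)n_\infty(u,v)-n_0(u,v)+1\bigr]$ actually matches the theorem's stated value set, whereas the paper's own tabulated values each carry a spurious $-1$; your explicit treatment of the case $v=0$ is also a detail the paper passes over.
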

\begin{proof}
     Note that every $x\in \mathbb{F}_{p^n}^*$ can be written as $x=\psi^{(p^l+1)k+j}$, where $0\leq k<\frac{p^n-1}{p^l+1}=K$ and $0\leq j\leq p^l$. From $\psi^{(p^l+1)d}=\psi^{p^l+1}$ and Lemma \ref{jielun} we have
\begin{equation*}
\begin{aligned}
     S_{d_1}(u,v)&=\sum\limits_{x\in \mathbb{F}_{p^n}}\zeta_p^{\Tr ^n_1(ux^{d_1}-vx)}\\
&=\displaystyle\sum_{j=0}^{p^l}\displaystyle\sum_{k=0}^{K}\zeta_p^{\Tr ^n_1(u\psi^{(p^l+1)d_1k+dj}-v{\psi^{(p^l+1)k+j})}}+1\\
&=\displaystyle\sum_{j=0}^{p^l}\displaystyle\sum_{k=0}^{K}\zeta_p^{\Tr ^n_1(\psi^{(p^l+1)k}(u{\psi^{d_1j}}-v\psi^{j}))}+1\\
&=\frac{1}{p^l+1} \big(p^{4l}n_\infty(u,v)-p^{3l}n_0(u,v)+p^{2l}n_1(u,v)\big).
\end{aligned}
\end{equation*}

It is therefore sufficient to find the distribution of $(n_\infty(u,v),n_0(u,v),n_1(u,v))$ when $(u,v)\in\mathbb{F}_{p^n}^*\times\mathbb{F}_{p^n}$.
According to Lemma \ref{n_i(u,v)}, when $u=1$ or $\psi$ ,the triplet $(n_\infty(u,v),n_0(u,v),n_1(u,v))$ and the value of $S_{d_1}(u,v)$ have the following possibilities:
\[
\begin{array}{cccc}
n_{\infty}(u,v) &  n_0(u,v) &  n_1(u,v) & S_{d_1}(u,v) \\
0  &  0  & p^l+1 & -1+p^{2l} \\
0  &  1  & p^l & -1 \\
0  &  2  & p^l-1 & -1-p^{2l} \\
0  &  3  & p^l-2 & -1-2p^{2l} \\
1  &  0  & p^l & -1+p^{3l} \\
1  &  1  & p^l-1 & -1+p^{3l}-p^{2l}
\end{array}
\]
By Lemma \ref{1 alpha}, the desired result follows.
\end{proof}

To determine the multiplicity on each value of $S_{d_1}(u,v)$ for $(u,v)\in\mathbb{F}_{p^n}^*\times\mathbb{F}_{p^n}$. We also need several moment identities.
\begin{lemma}\label{wf2 3powersum}
For the exponential sum $S_{d_1}(u,v)$, the following identities hold:
\begin{itemize}
\item[(i)] $\sum\limits_{u,v\in \mathbb{F}_{p^n}}S_{d_1}(u,v)=p^{2n}$.
\item[(ii)]$\sum\limits_{u,v\in \mathbb{F}_{p^n}}S_{d_1}(u,v)^2=p^{3n}$.
\item[(iii)] $\sum\limits_{u,v\in \mathbb{F}_{p^n}}S_{d_1}(u,v)^3=p^{13l}+p^{3n}-p^{9l}$.
\end{itemize}
\end{lemma}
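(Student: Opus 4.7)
The plan is to use character orthogonality: expanding $S_{d_1}(u,v)^k$ and interchanging the order of summation gives
\[
\sum_{u,v\in\mathbb{F}_{p^n}}S_{d_1}(u,v)^k=\sum_{x_1,\ldots,x_k\in\mathbb{F}_{p^n}}\Bigl(\sum_{u}\zeta_p^{\Tr^n_1(u\sum_i x_i^{d_1})}\Bigr)\Bigl(\sum_{v}\zeta_p^{-\Tr^n_1(v\sum_i x_i)}\Bigr).
\]
Each inner character sum collapses to $p^n$ when its linear form vanishes and to $0$ otherwise, so the $k$-th moment reduces to $p^{2n}$ times the number $N_k$ of $k$-tuples $(x_1,\ldots,x_k)\in\mathbb{F}_{p^n}^k$ satisfying simultaneously $\sum_i x_i=0$ and $\sum_i x_i^{d_1}=0$.

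For (i), only $(x_1)=(0)$ qualifies, so $N_1=1$. For (ii), the constraint $x_2=-x_1$ forces $x_1^{d_1}+x_2^{d_1}=x_1^{d_1}+(-1)^{d_1}x_1^{d_1}=0$, since $d_1=p^{3l}-p^{2l}+p^l$ is odd for odd $p$ (and in characteristic $2$ signs are irrelevant); hence $N_2=p^n$, giving the stated $p^{3n}$.

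The nontrivial part is (iii). I would eliminate $z=-(x+y)$ to convert the system into the single equation $(x+y)^{d_1}=x^{d_1}+y^{d_1}$ (valid in every characteristic once the parity of $d_1$ is accounted for). Splitting on $x+y$, the line $x+y=0$ contributes $p^n$ automatic solutions, while on $x+y\neq 0$ I set $x=st$, $y=s(1-t)$ with $s\in\mathbb{F}_{p^n}^*$ and $t\in\mathbb{F}_{p^n}$. After canceling $s^{d_1}$ the equation becomes the one-variable condition
\[
t^{d_1}+(1-t)^{d_1}=1,
\]
so $N_3=p^n+(p^n-1)M$, where $M$ is the number of $t$ satisfying this displayed equation.

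The crux is evaluating $M$. Substituting $t\mapsto -t$ (for $p$ odd, using that $d_1$ is odd), or working directly in characteristic $2$, turns the equation into $(t+1)^{d_1}-t^{d_1}=1$, which is the differential equation for the power map $f_1(x)=x^{d_1}$ at the value $b=1$. Since $f_1(x)=(x^d)^{p^l}$ is the $p^l$-power Frobenius applied to $f(x)=x^d$ with $d=p^{2l}-p^l+1$, one has $(t+1)^{d_1}-t^{d_1}=1$ if and only if $(t+1)^d-t^d=1$, and Lemma \ref{first lemma}(i) supplies exactly $p^l$ such $t$. Therefore $M=p^l$, which yields $N_3=p^n+(p^n-1)p^l=p^n+p^{n+l}-p^l$ and the desired value $p^{2n}N_3=p^{3n}+p^{3n+l}-p^{2n+l}=p^{3n}+p^{13l}-p^{9l}$ after substituting $n=4l$. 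The main obstacle is recognizing that the count $M$ is precisely $\delta_{f_1}(1,1)$, determined by the differential spectrum already established; once this reduction is in place the identity follows by straightforward bookkeeping.
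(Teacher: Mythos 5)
Your argument is correct, and its skeleton coincides with the paper's: both reduce the $k$-th moment by orthogonality of additive characters to $p^{2n}N_k$, where $N_k$ counts the $k$-tuples with $\sum_i x_i=\sum_i x_i^{d_1}=0$ (the paper treats (i) and (ii) as immediate and only writes out (iii)), and both reduce $N_3$ to a one-variable count over a projective normalization of the nonzero triples. The divergence is in how that count is finished. The paper normalizes to $y_1+y_2+1=0$, $y_1^{d_1}+y_2^{d_1}+1=0$ with $y_1y_2\neq0$ and evaluates the resulting condition directly: writing $d_1=p^{3l}-p^{2l}+p^l$ as a ratio of Frobenius powers, the equation collapses to $(y^{p^l}-y)^{p^{2l}+p^l}=0$, so the admissible $y$ are exactly $\mathbb{F}_{p^l}\setminus\{0,-1\}$ and $M_1=p^l-2$ --- a short, self-contained computation. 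You instead recognize $t^{d_1}+(1-t)^{d_1}=1$ as the differential equation $(t+1)^{d_1}-t^{d_1}=1$ (after $t\mapsto -t$ for odd $p$), identify it with $(t+1)^{d}-t^{d}=1$ via the $p^l$-th power Frobenius, and import $M=\delta(1)=p^l$ from Lemma \ref{first lemma}(i); your bookkeeping $N_3=p^n+(p^n-1)p^l$ agrees with the paper's $3p^n-2+(p^n-1)(p^l-2)$, and both yield $p^{13l}+p^{3n}-p^{9l}$. Your route makes the connection between the third moment and the differential spectrum explicit, which is conceptually attractive and spares a separate calculation; the cost is that Lemma \ref{first lemma}(i) is itself proved via the point-counting result of Lemma \ref{luo wang}, so the paper's direct Frobenius simplification is the more elementary argument at this particular spot. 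Either way the identity stands.
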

\begin{proof}
The proof of (i) and (ii) is obvious and we omit it here. For (iii),  we can calculate
\begin{equation*}
    \sum\limits_{u,v\in \mathbb{F}_{p^n}}S_{d_1}(u,v)^3=p^{2n}\cdotp \big(3p^n-2+(p^n-1)\cdotp M_1\big)
\end{equation*}
where $M_1$ is the number of solutions to
\begin{equation*}
    \left \{
\begin{array}{lll}y_1+y_2+1=0,\\
y_1^{d_1}+y_2^{d_1}+1=0,\\
\end{array}\right.
\end{equation*}
satisfying $y_1y_2\neq0$. Since $d_1=p^{3l}-p^{2l}+p^l$, this equals the number of solutions of \begin{equation*}
    y^{p^{3l}+p^l}(-y-1)^{p^{2l}}+(y+1)^{p^{3l}+p^l}y^{p^{2l}}-y^{p^{2l}}(y+1)^{p^{2l}}=0
\end{equation*} in $\mathbb{F}_{p^n}\setminus\{0,-1\}$, and therefore \begin{equation*}
    (y^{p^l}-y)^{p^{2l}+p^l}=0
\end{equation*} which gives $M_1=p^l-2$. Then \begin{equation*}
    \sum\limits_{u,v\in \mathbb{F}_{p^n}}S_{d_1}(u,v)^3=p^{2n}(3p^n-2+(p^n-1)(p^l-2))=p^{13l}+p^{3n}-p^{9l}.
\end{equation*}
\end{proof}

Based on  Theorem \ref{wf2 value}, Lemmas \ref{n_i(u,v)} and \ref{wf2 3powersum}, the value distribution of $S_{d_1}(u,v)$ is given as follows.

\begin{theorem}\label{wf2 value distribution} Let $p$ is a prime number and $p^l\equiv 2\,\,({\rm mod}\,\,3)$. Let $d_1=p^{3l}-p^{2l}+p^l$. Then the value distribution of the exponential sum $S_{d_1}(u,v)$ with $(u,v) \in {\mathbb{F}_{p^n}^*} \times {\mathbb{F}_{p^n}}$ can be described
 as follows.
	\begin{eqnarray*}
S_{d_1}(u,v)=
\left\{\begin{array}{llll}
-2p^{2l},&\mbox{occurs}&
({p^{8l}-3p^{7l}+3p^{6l}-p^{5l}-p^{4l}+3p^{3l}-3p^{2l}+p^l})/{6}&\mbox{times},\\
  -p^{2l},& \mbox{occurs}&
  p^{7l}-p^{6l}-p^{3l}+p^{2l}&\mbox{times},\\
  0,&\mbox{occurs}&
  ({p^{8l}-p^{7l}+p^{6l}-p^{5l}-3p^{4l}+p^{3l}-p^{2l}+p^l+2})/{2}&\mbox{times},\\
  p^{2l},& \mbox{occurs}&
  ({p^{8l} - p^{5l} - p^{4l}+p^l})/{3}&\mbox{times},\\
  p^{3l}-p^{2l},& \mbox{occurs}& p^l(p^{4l}-1)&\mbox{times},\\
  p^{3l},& \mbox{occurs}& p^{4l}-1&\mbox{times}.
  \end{array}\right.
  	\end{eqnarray*}
\end{theorem}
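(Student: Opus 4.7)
The plan is to combine Theorem \ref{wf2 value} with a direct combinatorial count for two of the six multiplicities and a linear system coming from the moment identities in Lemma \ref{wf2 3powersum} for the other four. Since Theorem \ref{wf2 value} shows that the value of $S_{d_1}(u,v)$ depends only on the triple $(n_\infty(u,v), n_0(u,v), n_1(u,v))$, and the map from triple to value is injective, the problem reduces to counting, for each of the six admissible triples, the number of pairs $(u,v) \in \mathbb{F}_{p^n}^* \times \mathbb{F}_{p^n}$ that realise it. Denote the six multiplicities by $M_{-2p^{2l}}, M_{-p^{2l}}, M_0, M_{p^{2l}}, M_{p^{3l}-p^{2l}}, M_{p^{3l}}$.

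First I would pin down $M_{p^{3l}}$ and $M_{p^{3l}-p^{2l}}$, which come exactly from the case $n_\infty(u,v)=1$. By Lemma \ref{n_i(u,v)}(i), $n_\infty(u,v)=1$ iff $v \in u\mu_{p^l+1}$, so for each fixed $u$ there are $p^l+1$ relevant values of $v$. Because the triple is determined by $S_{d_1}(u,v)$, Lemma \ref{1 alpha} allows me to replace $u$ by one of the two representatives $u=1$ (cube case) or $u=\psi$ (non-cube case). Parts (iii) and (iv) of Lemma \ref{n_i(u,v)} then split those $p^l+1$ values explicitly: for $u=1$, three values of $v$ give the triple $(1,0,p^l)$ and $p^l-2$ give $(1,1,p^l-1)$; for $u=\psi$, all $p^l+1$ give $(1,1,p^l-1)$. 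Summing over the $(p^n-1)/3$ cubes and $2(p^n-1)/3$ non-cubes of $\mathbb{F}_{p^n}^*$ yields $M_{p^{3l}}=p^{4l}-1$ and $M_{p^{3l}-p^{2l}}=p^l(p^{4l}-1)$.

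Next I would use four moment-type identities to solve for the remaining four multiplicities. The trivial count contributes $\sum_v M_v = p^n(p^n-1)$, and Lemma \ref{wf2 3powersum} furnishes $\sum_{u,v \in \mathbb{F}_{p^n}} S_{d_1}(u,v)^k$ for $k=1,2,3$. Passing to $u \in \mathbb{F}_{p^n}^*$ requires subtracting the $u=0$ contributions: since $S_{d_1}(0,0)=p^n$ and $S_{d_1}(0,v)=0$ for $v \ne 0$, the restricted moments become $p^{2n}-p^n$, $p^{3n}-p^{2n}$ and $p^{13l}-p^{9l}$ respectively. Substituting the two already-known multiplicities leaves a $4\times 4$ linear system in $(M_{-2p^{2l}}, M_{-p^{2l}}, M_0, M_{p^{2l}})$, which I would solve and check matches the statement.

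The main obstacle I expect is the polynomial bookkeeping in the final step: the right-hand sides after substitution are polynomials in $p^l$ of degree up to $8$, and one must carefully track the $u=0$ correction in each moment as well as the expansions of $(p^{3l}-p^{2l})^k \cdot p^l(p^{4l}-1)$ and $p^{3kl}(p^{4l}-1)$ for $k=1,2,3$ before the system cleanly decouples. Once these expansions are done correctly, the reductions give relations such as $-2M_{-2p^{2l}}-M_{-p^{2l}}+M_{p^{2l}}=0$ (from the first moment) and a clean expression for $-6M_{-2p^{2l}}$ (from combining first and third moments), after which the four multiplicities can be read off directly.
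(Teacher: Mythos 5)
Your proposal is correct and follows essentially the same route as the paper: the authors likewise obtain $E_4=p^l(p^{4l}-1)$ and $E_5=p^{4l}-1$ from Lemmas \ref{1 alpha} and \ref{n_i(u,v)} together with Theorem \ref{wf2 value}, and then solve the linear system formed by the trivial count and the first three moments of Lemma \ref{wf2 3powersum} (with the same $u=0$ corrections you describe). Your explicit cube/non-cube tally for the two pinned-down multiplicities just spells out what the paper leaves implicit.
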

\begin{proof}
When $(u,v)$ runs through ${\mathbb{F}_{p^n}^*} \times {\mathbb{F}_{p^n}}$, the possible values of $S_{d}(u,v)$ are confirmed in Theorem \ref{wf2 value}. Suppose $-2p^{2l}$ occurs $E_0$ times, $-p^{2l}$ occurs $E_1$ times, $0$ occurs $E_2$ times, $p^{2l}$ occurs $E_3$ times, $p^{3l}-p^{2l}$ occurs $E_4$ times, $p^{3l}$ occurs $E_5$ times.
Then firstly we have
\begin{equation}\label{wf2 sum 1}
  E_0+E_1+E_2+E_3+E_4+E_5=p^n(p^n-1).
\end{equation}
Secondly, by Lemma \ref{wf2 3powersum} we can derive
\begin{equation}\label{wf2 sum 2}
     -2E_0-E_1+E_3+(p^l-1)\cdotp E_4+p^l\cdotp E_5=p^{6l}-p^{2l}
\end{equation}
\begin{equation}\label{wf2 sum 3}
     4E_0+E_1+E_3+(p^l-1)^2\cdotp E_4+p^{2l}\cdotp E_5=p^{2n}-p^n
\end{equation}
\begin{equation}\label{wf2 sum 4}
     -8E_0-E_1+E_3+(p^l-1)^3\cdotp E_4+p^{3l}\cdotp E_5=p^{7l}-p^{3l}.
\end{equation}
From Lemmas \ref{1 alpha}, \ref{n_i(u,v)} and Theorem \ref{wf2 value}, we can derive
\begin{equation}\label{wf2 sum 5}
  E_4=p^l(p^{4l}-1)
\end{equation}
\begin{equation}\label{wf2 sum 6}
  E_5=(p^{4l}-1).
\end{equation}
Solving the system equations consisting of (\ref{wf2 sum 1})-(\ref{wf2 sum 6}) we obtain the value distribution of $S_{d_1}(u,v)$ for $(u,v) \in {\mathbb{F}_{p^n}^*} \times {\mathbb{F}_{p^n}}$.
\end{proof}

Let $n=4l$ and $p^{l} \equiv 2\,\,({\rm mod}\,\,3)$.
Let $h_1(x)$ and $h_2(x)$ be the minimal polynomials of $\psi^{-1}$ and $\psi^{-({p^{3l}-p^{2l}+p^{l}})}$ over $\mathbb{F}_{p}$, respectively. Let $\mathcal{C}$ be the cyclic code with parity-check polynomial $h_1(x)h_2(x)$. By Delsarte's Theorem [\ref{Delsarte theorem}], the cyclic code $\mathcal{C}$ can be expressed as
\begin{equation*}
    \mathcal{C}=\left\{c_{u,v}=\left(\Tr ^{n}_1(u\psi^{i(p^{3l}-p^{2l}+p^{l})}+v\psi^i)\right)_{i=0}^{p^n-2}\mid u,v\in \mathbb{F}_{p^{n}} \right\}.
\end{equation*}
Denote code $\mathcal{C'}$ by
\begin{equation*}
    \mathcal{C'}=\left\{c'_{u,v}=\left(\Tr ^{n}_1(u\psi^{j(p^{3l}-p^{2l}+p^{l})}+v\psi^j)\right)_{j=0}^{\frac{p^n-p}{p-1}}\mid u,v\in \mathbb{F}_{p^{n}} \right\}.
\end{equation*}

Based on the result in Theorem \ref{wf2 value distribution}, the weight distribution of $\mathcal{C'}$ is determined as follows.
\begin{theorem}\label{wf2 code}
The $\mathcal{C'}$ is a $\left[\frac{p^n-1}{p-1},2n,p^{n-1}-p^{3l-1}\right]$ cyclic code. Furthermore, the weight distribution of $\mathcal{C'}$ is given in Table \ref{table5}.
\end{theorem}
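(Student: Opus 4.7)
The plan is to reduce the Hamming weight of every codeword $c'_{u,v}$ to a simple function of the exponential sum value $T(u,v) := S_{d_1}(u,-v)$, so that the weight distribution reads off directly from Theorem \ref{wf2 value distribution}.

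The crux is a symmetry argument. Since $d_1 = p^{3l}-p^{2l}+p^l \equiv 1 \pmod{p-1}$, one has $t^{d_1}=t$ for every $t\in\mathbb{F}_p^{*}$, and the substitution $x\mapsto t^{-1}x$ therefore gives
\begin{equation*}
T(tu,tv) \;=\; T(u,v), \qquad \forall\, t \in \mathbb{F}_p^{*}.
\end{equation*}
This orbit-invariance has two consequences. First, at the level of the long code $\mathcal{C}$, it forces $c_{u,v}(i+k(p^n-1)/(p-1)) = \omega^k c_{u,v}(i)$, where $\omega = \psi^{(p^n-1)/(p-1)}$ generates $\mathbb{F}_p^{*}$; so each short position corresponds to $p-1$ long positions that are simultaneously zero or simultaneously nonzero, whence $wt(c_{u,v}) = (p-1)\,wt(c'_{u,v})$ and the induced correspondence $\mathcal{C}\leftrightarrow\mathcal{C}'$ is a bijection, yielding $\dim\mathcal{C}' = \dim\mathcal{C} = 2n$. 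Second, combined with additive orthogonality,
\begin{equation*}
p\cdot N_0(u,v) \;=\; p^n + \sum_{t\in\mathbb{F}_p^{*}} T(tu,tv) \;=\; p^n + (p-1)\,T(u,v),
\end{equation*}
where $N_0(u,v) = \#\{x\in\mathbb{F}_{p^n}: \Tr^n_1(ux^{d_1}+vx) = 0\}$. Since $wt(c_{u,v}) = p^n - N_0(u,v)$, one obtains the clean weight formula
\begin{equation*}
wt(c'_{u,v}) \;=\; \frac{p^n - T(u,v)}{p}.
\end{equation*}

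Substituting the six values of $T$ provided by Theorem \ref{wf2 value distribution} (and noting that $v\mapsto -v$ preserves the value distribution) produces exactly six nonzero weights: $p^{n-1}-p^{3l-1}$, $p^{n-1}-p^{3l-1}+p^{2l-1}$, $p^{n-1}-p^{2l-1}$, $p^{n-1}$, $p^{n-1}+p^{2l-1}$, and $p^{n-1}+2p^{2l-1}$. Their multiplicities are the corresponding counts from Theorem \ref{wf2 value distribution}, with two small boundary adjustments: $(u,v) = (0,0)$ satisfies $T = p^n$ and contributes the single zero codeword, while the $p^n-1$ pairs $(0,v)$ with $v\neq 0$ satisfy $T(0,v) = 0$ and so add $p^n-1$ to the multiplicity of the weight $p^{n-1}$. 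The length $(p^n-1)/(p-1)$ is immediate from the definition, the minimum distance $p^{n-1}-p^{3l-1}$ corresponds to the maximum value $T = p^{3l}$ (occurring $p^n-1$ times), and the (consta)cyclic structure is recovered by tracking how the coordinate shift $j\mapsto j+1$ acts as $(u,v)\mapsto(u\psi^{-d_1},v\psi^{-1})$ up to the wrap-around factor $\omega$.

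The only genuinely nontrivial step is the orbit-invariance $T(tu,tv) = T(u,v)$. Without it the weight would depend on the orbit sum $\sum_t T(tu,tv)$, whose distribution is not contained in Theorem \ref{wf2 value distribution} and would require additional moment analysis; the congruence $d_1 \equiv 1 \pmod{p-1}$ is precisely what makes the sum collapse to $(p-1)T(u,v)$.
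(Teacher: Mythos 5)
Your proposal is correct and takes essentially the same route as the paper: both proofs exploit the congruence $d_1\equiv 1\ ({\rm mod}\ p-1)$ to collapse the sum over $\mathbb{F}_p^{*}$ (equivalently, your orbit-invariance $T(tu,tv)=T(u,v)$) and to obtain the $(p-1)$-fold repetition structure $c_{u,v}=(c'_{u,v},\beta c'_{u,v},\dots,\beta^{p-2}c'_{u,v})$ linking $\mathcal{C}$ and $\mathcal{C}'$. The resulting weight formula and the bookkeeping of the $u=0$ cases match the paper's argument, so the multiplicities are read off from Theorem \ref{wf2 value distribution} in the same way.
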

\begin{table}[!ht]
    \centering
    \caption{The weight distribution of $\mathcal{C'}$\\}
    \label{table5}
    \begin{tabular}{lll}
      \toprule
      Weight              &  & Number of codewords         \\
      \midrule
      $0$  & & $1$   \\
$p^{n-1}-p^{3l-1}$ & & $p^{n}-1$ \\
$p^{n-1}-p^{3l-1}+p^{2l-1}$ & & $p^{l}(p^{n}-1)$ \\
$p^{n-1}-p^{2l-1}$ & &  $({p^{2n} - p^{5l} - p^{n}+p^{l}})/{3}$ \\
$p^{n-1}$ & & $({p^{2n}-p^{7l}+p^{6l}-p^{5l}-p^{n}+p^{3l}-p^{2l}+p^{l}})/{2}$ \\
$p^{n-1}+p^{2l-1}$ &  & $p^{7l}-p^{6l}-p^{3l}+p^{2l}$ \\
$p^{n-1}+2p^{2l-1}$ &  & $({p^{2n}-3p^{7l}+3p^{6l}-p^{5l}-p^{n}+3p^{3l}-3p^{2l}+p^{l}})/{6}$ \\
\bottomrule
    \end{tabular}
  \end{table}
\begin{proof}
    It is obvious that dimension of $\mathcal{C}$ is $2n$. For the codeword $c_{u,v}$ of $\mathcal{C}$, we also calculate its Hamming weight by using exponential sum:
\begin{equation*}
    \begin{aligned}
        \omega_H(c_{u,v}) &=p^{n}-1-\# \left\{x\in\mathbb{F}_{p^{n}}^*\mid \Tr_1^{n}(ux^{{p^{3l}-p^{2l}+p^{l}}}+vx)=0 \right\}\\
        &=p^{n}-\frac{1}{p}\sum\limits_{y\in \mathbb{F}_{p}}\sum\limits_{x\in \mathbb{F}_{p^{n}}}\omega_p^{y{\Tr ^{n}_1(ux^{p^{3l}-p^{2l}+p^{l}}+vx)}}\\
        &=p^{n-1}(p-1)-\frac{1}{p}\sum\limits_{y\in \mathbb{F}_{p}^*}\sum\limits_{x\in \mathbb{F}_{p^{n}}}\omega_p^{\Tr ^{n}_1(u{(xy)}^{{p^{3l}-p^{2l}+p^{l}}}+vxy)}\\
        &=p^{n-1}(p-1)-\frac{p-1}{p}\sum\limits_{x\in \mathbb{F}_{p^{n}}}\omega_p^{\Tr ^{n}_1(u{x}^{{p^{3l}-p^{2l}+p^{l}}}+vx)}.\\
    \end{aligned}
\end{equation*}
Therefore, the weight distribution of cyclic code $\mathcal{C}$ is directly determined by the value distribution of the exponential sum $\sum\limits_{x\in \mathbb{F}_{p^{n}}}\omega_p^{\Tr ^{n}_1(ux^{{p^{3l}-p^{2l}+p^{l}}}+vx)}$. If $u=0$, we have
\begin{equation*}
    \omega_H(c_{0,v})=
	\left\{ \begin{array}{llll}
		0&\mbox{if}&v=0,\\
		p^{n-1}(p-1)&\mbox{if}&v\neq0.
	\end{array}\right.
\end{equation*}
If $u\neq0$, the value distribution of $\omega_H(c_{u,v})$ follows from Theorem \ref{wf2 value distribution}. Let $\beta=\psi^{\frac{p^n-1}{p-1}}$ be a primitive element in $\mathbb{F}_p$. Note that
\begin{equation*}
  \Tr ^{n}_1\left(u\psi^{\big(j+k\frac{p^n-1}{p-1}\big)({p^{3l}-p^{2l}+p^{l}})}
  +v\psi^{j+k\frac{p^n-1}{p-1}}\right)=\beta^k\cdot\Tr ^{n}_1\left(u\psi^{j({p^{3l}-p^{2l}+p^{l}})}+v\psi^j\right),\mbox{ }0\leq k\leq p-2
\end{equation*}
implies
\begin{equation*}
  c_{u,v}=(c'_{u,v},\beta c'_{u,v},\cdots,\beta^{p-2}c'_{u,v}),
\end{equation*}
which yields the weight distribution of $\mathcal{C'}$.
\end{proof}

\begin{remark}
$\mathcal{C'}$ is $\beta^{-1}$-constacyclic code.
\end{remark}

In what follows, we provide some examples to verify our results.

\begin{example} Let $p=5$, $n=4$. Then $\mathcal{C'}$ is a $5$-ary cyclic code with parameters $[156,8,100]$. The weight enumerator of $\mathcal{C'}$ is
\begin{equation*}
1+624z^{100}+3120z^{105}+128960z^{120}+162240z^{125}+62400z^{130}+33280z^{135}.
\end{equation*}
\end{example}

\begin{example} Let $p=2$, $n=12$. Then $\mathcal{C'}$ is a binary cyclic code with parameters $[4095,24,1792]$. The weight enumerator of $\mathcal{C'}$ is
\begin{equation*}
1+4095z^{1792}+32760z^{1824}+5580120z^{2016}+7452900z^{2048}+1834560z^{2080}+1872780z^{2112}.
\end{equation*}
\end{example}

\begin{example} Let $p=11$, $n=4$. Then $\mathcal{C'}$ is a $11$-ary cyclic code with parameters $[1464,8,1210]$. The weight enumerator of $\mathcal{C'}$ is
\begin{equation*}
1+14640z^{1210}+161040z^{1221}+71394400z^{1320}+98234400z^{1331}+17714400z^{1342}+26840000z^{1353}.
\end{equation*}
\end{example}

\section{Conclusion}\label{conclusion}

\quad\; In this paper, we completely calculate the differential spectrum of $x^{p^{2l}-p^{l}+1}$. We investigate the $c$-differential uniformity of the power mapping $x^{p^{2l}-p^{l}+1}$. We also calculate the value distribution of a class of exponential sum. In addition,  we obtain a class of six-weight $p$-ary $\beta^{-1}$-constacyclic cyclic codes whose weight distribution can be explicitly determined.

\end{document}